\newtheorem{theorem}{Theorem}
\newtheorem{lemma}[theorem]{Lemma}
\newtheorem{proposition}[theorem]{Proposition}
\theoremstyle{definition}
\newtheorem{definition}[theorem]{Definition}
\newcommand{\norm}[1]{\left\Vert {#1} \right\Vert}
\newcommand{\abs}[1]{\left\vert {#1} \right\vert}
\newcommand{\set}[1]{\left\{ {#1} \right\}}
\newcommand{\prt}[1]{\left( {#1} \right)}
\newcommand{\scal}[1]{\left< {#1} \right>}
\newcommand{\setN}{{\mathbb N}}              
\newcommand{\setZ}{{\mathbb Z}}
\newcommand{\setR}{{\mathbb R}}
\newcommand{\setC}{{\mathbb C}}
\newcommand{\lequi}{\ \ \Longleftrightarrow\ \ }
\newcommand{\A}{\mathcal{A}}
\newcommand{\B}{\mathcal{B}}
\renewcommand{\H}{\mathcal{H}}
\newcommand{\I}{\mathcal{I}}
\newcommand{\J}{\mathcal{J}}
\renewcommand{\L}{\mathcal{L}}
\newcommand{\M}{\mathcal{M}}
\newcommand{\T}{\mathcal{T}}
\newcommand{\Sw}{\mathcal{S}}
\begin{document}

\title{\vspace{-1.1cm}{\bf Temporal Lorentzian Spectral Triples}}

\author{Nicolas Franco}

\date{{\footnotesize 
Copernicus Center for Interdisciplinary Studies{\footnote{{\small supported by a grant from the John Templeton Foundation}}},\\
Jagiellonian University,\\
ul. S{\l}awkowska 17, PL-31-016 Krak\'ow, Poland\\[0.3cm]
University of Namur, Department of Mathematics,\\
Rempart de la Vierge 8, B-5000 Namur, Belgium\\[0.3cm]
nicolas.franco@math.unamur.be
}}

\maketitle

\begin{abstract}
We present the notion of temporal Lorentzian spectral triple which is an extension of the notion of pseudo-Riemannian spectral triple with a way to ensure that the signature of the metric is Lorentzian. A temporal Lorentzian spectral triple corresponds to a specific 3+1 decomposition of a possibly noncommutative Lorentzian space. This structure introduces a notion of global time in noncommutative geometry. As an example, we construct a temporal Lorentzian spectral triple over a Moyal--Minkowski spacetime. We show that, when time is commutative, the algebra can be extended to unbounded elements. Using such an extension, it is possible to define a Lorentzian distance formula between pure states with a well-defined noncommutative formulation.
\end{abstract}

\section{Introduction}

The theory of noncommutative geometry is related to the duality existing between algebra and geometry. Thanks to Gel'fand--Naimark theorem, we know that unital commutative $C^*$-algebras are equivalent in a category-theoretical sense to algebras of continuous functions on compact Hausdorff spaces. A similar result is valid for locally compact Hausdorff spaces being equivalent to not necessarily unital commutative algebras. The extension of this duality to noncommutative $C^*$-algebras opens the way to the definition of  noncommutative spaces.

Behind those topological considerations, some metric information can be added by using Connes' notion of spectral triples \cite{C94,MC08,Var}.  A spectral triple $(\A,\H,D)$ is composed of a Hilbert space $\mathcal{H}$, a unital pre-$C^*$-algebra $\mathcal{A}$ with a faithful representation as bounded multiplicative operators on $\mathcal{H}$ and a self-adjoint operator $D$ densely defined on $\mathcal{H}$ with compact resolvent and such that all commutators $[D,a]$ are bounded for every $a\in\A$.

In the commutative case, the correspondence with Riemannian geometry is given by the so-called reconstruction theorem \cite{C96,C08}. For every compact Riemannian spin manifold, a spectral triple can be constructed by taking the Hilbert space $\H = L^2(\M,S)$ of square integrable spinor sections over $\M$, the unital pre-$C^*$-algebra $\A=C^\infty(\M)$ (with pointwise multiplication and supremum norm) and the Dirac operator $D = -i(c \circ \nabla^S)$, where $c$ represents the Clifford action. The reconstruction theorem states that, under some additional axioms, commutative unital spectral triples correspond to existing compact Riemannian spin manifolds. Information on the differential structure can be extracted using the data given by a spectral triple. As an example, a notion of Riemannian distance between two states $d(\xi,\eta)$, corresponding to the usual notion of distance, can be defined as $d(\xi,\eta)=\sup\set{  \abs{\xi(a)-\eta(a)} : a \in \A,\  \norm{[D,a]} \leq 1 }$.

One goal of Connes' noncommutative geometry is to provide a new mathematical background which would be suitable to describe the fundamental forces of physics in a similar formalism. Such a background is given by the product of two spectral triples, a commutative one representing Euclidean gravitation (Euclidean in the sense of positive signature), and a noncommutative one representing a (still classical) standard model \cite{MC08,MC207}.

At this time, the theory of spectral triples is only completely defined for spaces with positive signature for the metric. However, this does not correspond to physical reality. A Lorentzian counterpart would be very useful especially for models dealing with gravitation. The extension to pseudo-Riemannian and Lorentzian spaces of the notion of a spectral triple is far from being complete, but some propositions exist \cite{Haw,Kopf98,Kopf00,Kopf01,Kopf02,Stro,Suij,PasV,Pas,Rennie12}.

In this paper we present a new way to construct Lorentzian spectral triples, starting from the interesting proposition by Strohmaier \cite{Stro} on building pseudo-Riemannian spectral triples based on Krein spaces. We show that it is possible to construct a spectral triple representing a (possibly noncommutative) globally hyperbolic Lorentzian manifold with a specific "3+1" decomposition and with a specific element of the algebra representing a global time. Such a construction is called a temporal Lorentzian spectral triple. The time element has the particularity to completely define a fundamental symmetry needed to recover the self-adjointness of the Dirac operator. As an improvement of the pseudo-Riemannian approach \cite{Stro,Rennie12}, we show that a specific construction of a fundamental symmetry, defined as the commutator between the Dirac operator and a time element, ensures that the signature of the metric corresponds to a Lorentzian space. This is a crucial property in order to recover the information coming from the Lorentzian nature of physical spacetimes like causality or Lorentzian distance as in \cite{F6}.

Furthermore, the definition of a Lorentzian distance formula as constructed in \cite{F6,F3} requires the use of unbounded functions and its noncommutative generalization cannot fit into the usual formalism of $C^*$-algebras. Starting with our axioms of temporal Lorentzian spectral triple, we show that, when time is commutative, such a construction admits an extension of the algebra to unbounded elements by the way of a filtration. Then we show that a well-defined noncommutative formulation of a Lorentzian distance formula is possible in this context by defining a notion of distance between the unique extension of two given pure states.

The plan of this paper is the following. In Section \ref{Krein} we recall the basic notions about Krein spaces and pseudo-Riemannian spectral triples. In Section \ref{TLSP} we present the axioms of a temporal Lorentzian spectral triple and the method for generating a filtered algebra of unbounded elements. In Section \ref{NLMP} we construct an example of noncommutative temporal Lorentzian spectral triple as the Moyal--Minkowski spacetime. In Section \ref{distsec}, we illustrate how the extension to unbounded elements can be used to define a Lorentzian distance formula for noncommutative temporal Lorentzian spectral triples with commutative time.

\section{Krein Spaces and Pseudo-Riemannian Spectral Triples}\label{Krein}

We start with a review of the tools introduced by Strohmaier in order to adapt the construction of spectral triples to pseudo-Riemannian manifolds. We will just be interested here in the results and we refer the reader to \cite{Stro} for the proofs and details. Some results about Dirac operators in pseudo-Riemannian geometry can also be found in \cite{BaumG,BaumE}.

In the same way as a Riemannian spectral triple, a pseudo-Riemannian spectral triple is a triple \mbox{$\prt{\A,\H,D}$} which corresponds in the commutative case to the algebra \mbox{$\A = C^\infty(\M)$} of smooth functions over a pseudo-Riemannian spin manifold $\M$ of signature \mbox{$(p,q)$} (with $q\geq 1$), to the Hilbert space $\H$ consisting of square integrable sections of the spinor bundle over $\M$ and on which there exists a representation of $\A$ as multiplicative bounded operators, and to the Dirac operator \mbox{$D = -i(c \circ \nabla^S)$} acting on the space $\H$.

The Hilbert space $\H$ is endowed with the positive definite Hermitian structure
\[(\psi,\phi) = \int_\M \psi^*\phi\, d\mu_g\]
where $d\mu_g = \sqrt{\abs{\det g}} \;d^nx$ is the pseudo-Riemannian density on $\M$.

However, this structure does not admit any self-adjoint pseudo-Riemannian Dirac operator related to this positive definite inner product. Instead, the Dirac operator corresponding to the pseudo-Riemannian metric is an essentially Krein-self-adjoint operator if we transform $\H$ into a Krein space \cite{Bog}, which is a space with indefinite inner product.

\begin{definition}
An indefinite inner product on a vector space $V$ is a map \mbox{$V \times V \rightarrow \mathbb C$} which satisfies:
\[(v,\lambda w_1 + \mu w_2) = \lambda (v,w_1) + \mu(v, w_2), \qquad \overline{(v,w)} = (w,v).\]
An indefinite inner product is nondegenerate if
\[(v,w)=0 \quad\forall v\in V \ \ \Rightarrow\ \  w = 0.\]
\end{definition}

Let us suppose that $V$ can be written as the direct sum of two orthogonal spaces \mbox{$V = V^+ \oplus V^-$} such that the inner product is positive definite on $V^+$ and negative definite on $V^-$. Then the two spaces $V^+$ and $V^-$ are two pre-Hilbert spaces by the induced inner product (with a multiplication by $-1$ on the inner product for the second one).

\begin{definition}[\cite{Bog}]
If the two subspaces $V^+$ and $V^-$ are complete in the norm induced on them and if the indefinite inner product on $V$ is nondegenerate, then the space $V = V^+ \oplus V^-$ is called a Krein space. The  indefinite inner product is called a Krein inner product.
\end{definition}

\begin{definition}[\cite{Bog}]
For every decomposition \mbox{$V = V^+ \oplus V^-$} the operator $\J = \text{id}_{V^+} \oplus -\text{id}_{V^-}$ respecting the property $\J^2 = 1$ is called a fundamental symmetry. Such an operator defines a positive definite inner product on $V$ by \mbox{$\scal{\,\cdot\,,\,\cdot\,}_\J = \prt{\,\cdot\,,\J\,\cdot\,}$}.
\end{definition}

Each fundamental symmetry of a Krein space $V$ defines a Hilbert space structure, and two norms associated with two different fundamental symmetries are equivalent. So it is natural to define the space of bounded operators as the space of bounded operators on the Hilbert space defined for any fundamental symmetry.

\begin{definition}[\cite{Bog}]
If $A$ is a densely defined linear operator on $V$, the Krein-adjoint $A^+$ of $A$ is the adjoint operator defined for the Krein inner product \mbox{$(\,\cdot\,,\,\cdot\,)$} (with the usual definition and domain). An operator $A$ is called Krein-self-adjoint if \mbox{$A=A^+$}.
\end{definition}

Of course, for any fundamental symmetry $\J$ we can define an adjoint $A^*$ for the positive definite inner product $\scal{\,\cdot\,,\,\cdot\,}_\J$. In this case, the Krein-adjoint is related to it by \mbox{$A^+ = \J A^* \J$}, and an operator $A$ is Krein-self-adjoint if and only if $\J\!A$ or $A\,\J$ are self-adjoint in the Hilbert space.

Now the question is how we could define a Krein space structure from a spin manifold with pseudo-Riemannian metric. This is done by using spacelike reflections.

\begin{definition}[\cite{Stro}]
A spacelike reflection $r$ is an automorphism of the vector bundle $T\M$ such that:
\begin{itemize}
\item $g(r\,\cdot\,,r\,\cdot\,) = g(\,\cdot\,,\,\cdot\,)$,
\item $r^2=\text{id}$,
\item $g^r(\,\cdot\,,\,\cdot\,) = (\,\cdot\,,r\,\cdot\,)$ is a positive definite metric on $T\M$.
\end{itemize}
\end{definition}

It is clear that, for every pseudo-Riemannian metric of signature $(p,q)$, the tangent bundle can be split into an orthogonal direct sum \mbox{$T\M = T\M_+^p \oplus T\M_-^q$} where the metric is positive definite on the $p$-dimensional bundle $T\M_+^p$ and negative on the $q$-dimensional bundle $T\M_-^q$, thus a spacelike reflection is automatically associated by defining \mbox{$r(v_{+|x} \oplus v_{-|x}) = v_{+|x} \oplus -v_{-|x}$}. This splitting is  transposed to the cotangent bundle \mbox{$T\M^* = T\M_+^{*p} \oplus T\M_-^{*q}$} by isomorphism.

\begin{proposition}[\cite{Stro}]\label{propslr}
For each spacelike reflection $r$, there is an associated fundamental symmetry $\J_r$ defined from the Clifford action $c$ on a local oriented orthonormal basis \mbox{$\set{e_1,e_2,\dots,e_q}$} of $T\M_-^{*q}$ by:
\[\J_r = i^{\frac{q(q+1)}{2}} c(e_1)c(e_2)\dots c(e_q) =  i^{\frac{q(q+1)}{2}}\gamma^1\gamma^2\dots\gamma^q.\]
\end{proposition}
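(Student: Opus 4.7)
The plan is to verify the two conditions making $\J_r$ a fundamental symmetry in the sense of the definition above, namely $\J_r^2 = \mathrm{id}$ and positive definiteness of the inner product $\scal{\,\cdot\,,\,\cdot\,}_{\J_r} = \prt{\,\cdot\,,\J_r\,\cdot\,}$. A preliminary point is global well-definedness, since the formula for $\J_r$ is written in a local oriented orthonormal frame of $T\M_-^{*q}$.

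For global well-definedness, any two oriented orthonormal frames of $T\M_-^{*q}$ differ by an element $R\in SO(q)$ acting on the fibres, and the Clifford volume element $c(e_1)\cdots c(e_q)$ transforms by $\det(R)=+1$ under such a change, so $\J_r$ is intrinsically defined. For the involution property, I would use the Clifford relations $c(v)c(w)+c(w)c(v)=2g(v,w)$ together with $g(e_i,e_i)=-1$ to compute
\[ (\gamma^1\cdots\gamma^q)^2 \;=\; (-1)^{q(q-1)/2}(\gamma^1\cdots\gamma^q)(\gamma^q\cdots\gamma^1) \;=\; (-1)^{q(q-1)/2}(-1)^q \;=\; (-1)^{q(q+1)/2}, \]
where the first equality uses $\gamma^q\cdots\gamma^1 = (-1)^{q(q-1)/2}\gamma^1\cdots\gamma^q$ and the second follows by iteratively collapsing $(\gamma^i)^2 = -1$. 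Since $(i^{q(q+1)/2})^2 = (-1)^{q(q+1)/2}$, this yields $\J_r^2 = 1$.

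For positive definiteness of $\prt{\,\cdot\,,\J_r\,\cdot\,}$, the argument is pointwise. In a local trivialization of the spinor bundle adapted to the spacelike reflection $r$, the indefinite Hermitian form on fibres can be written $(\psi,\phi) = \psi^*\beta\phi$ for a nondegenerate Hermitian matrix $\beta$, characterized up to sign by the Krein-adjunction rule $c(v)^+ = c(v)$ for $v$ in the negative part of $T\M$ and $c(v)^+ = -c(v)$ for $v$ in the positive part (or the opposite, depending on convention). A direct computation shows that the phase $i^{q(q+1)/2}$ is chosen precisely so that $\beta\J_r$ is Hermitian and positive definite at every point, i.e.\ $(\psi,\J_r\psi)>0$ for $\psi\neq 0$. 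Integration over $\M$ then gives positive definiteness of $\scal{\,\cdot\,,\,\cdot\,}_{\J_r}$ on $\H$, which is equivalent to saying that the $\pm 1$ eigenspaces of $\J_r$ realize the Krein decomposition $\H = \H^+\oplus\H^-$.

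The main obstacle is the verification of positive definiteness: the identities $\J_r^2=1$ and $\beta\J_r > 0$ are enforced simultaneously by the single phase $i^{q(q+1)/2}$, and establishing positivity requires an explicit choice of Clifford representation together with careful tracking of phase factors depending on $(p,q)$. The involution property and frame independence are, by contrast, straightforward algebraic consequences of the Clifford relations and orientation-preservation.
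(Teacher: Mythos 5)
The paper offers no proof of this statement to compare against: Proposition \ref{propslr} is quoted verbatim from Strohmaier \cite{Stro} (ultimately from Baum \cite{BaumG}), and the author explicitly defers all proofs of Section \ref{Krein} to that reference. So your proposal can only be judged on its own merits. The parts you actually carry out are correct: the frame-independence via $\det(R)=+1$ for oriented orthonormal changes of frame, and the computation $(\gamma^1\cdots\gamma^q)^2=(-1)^{q(q-1)/2}(-1)^q=(-1)^{q(q+1)/2}$, which combined with $\bigl(i^{q(q+1)/2}\bigr)^2=(-1)^{q(q+1)/2}$ gives $\J_r^2=1$. (A useful sanity check is $q=1$, where this reproduces the paper's $J=i\gamma^0$ with $(i\gamma^0)^2=1$.)

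The genuine gap is exactly where you locate it, but locating it is not the same as closing it. Showing $\J_r^2=1$ only makes $\J_r$ an involution; what makes it a \emph{fundamental symmetry} in the sense of the definition is (i) that $\J_r$ is symmetric for the intrinsic indefinite spinor form $(\psi,\phi)=\psi^*\beta\phi$, so that its $\pm1$ eigenspaces are $(\cdot,\cdot)$-orthogonal and $(\cdot,\J_r\cdot)$ is Hermitian, and (ii) that $(\cdot,\J_r\cdot)$ is positive definite (completeness of the eigenspaces then follows since the resulting norm is an $L^2$ norm). Your proof asserts both via ``a direct computation shows that the phase is chosen precisely so that $\beta\J_r$ is Hermitian and positive definite,'' which is the entire substantive content of the proposition and is not performed. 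Note also that the phase $i^{q(q+1)/2}$ is already forced up to an overall sign by $\J_r^2=1$ together with Hermiticity, so the phase cannot by itself ``enforce'' positivity; positivity is a separate fibrewise statement that requires fixing the normalization of $\beta$ (e.g.\ the Spin$_0(p,q)$-invariant form with $c(v)^+=\pm c(v)$ according to the causal character of $v$, as in \cite{BaumG}) and then diagonalizing $\J_r$ in an explicit Clifford module to check that its $+1$/$-1$ eigenspaces coincide with the positive/negative subspaces of $\beta$. As written, the proposal is a correct and honest outline of the standard argument, but not yet a proof.
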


This definition is independent of the choice of the local basis. With such a fundamental symmetry, the space  $\H$  of square integrable sections of the spinor bundle becomes a Krein space endowed with the indefinite inner product
\[(\psi,\phi) = \int_\M \psi^* \J \phi\, d\mu_g.\]
 Actually, this operation is similar to a Wick rotation, but performed at an algebraic level.
 
In the special case of a $4$-dimensional Lorentzian manifold, with signature \mbox{$(-,+,+,+)$}, a canonical fundamental symmetry is just given by \mbox{$J = i\gamma^0$}.\footnote{With our choice of signature, the flat Dirac matrix $\gamma^0$ is such that $\prt{\gamma^0}^2=-1$ and $\prt{\gamma^0}^*=-\gamma^0$, so $J = i\gamma^0$ respects the conditions of a fundamental symmetry. The other flat Dirac matrices respect $\prt{\gamma^a}^2=1$ and $\prt{\gamma^a}^*=\gamma^a$ for $a=1,2,3$.}

We have then the following result concerning the Dirac operator:

\begin{proposition}[\cite{Stro,BaumG}]\label{cpt1}
If there exists a spacelike reflection such that the Riemannian metric $g^r$ associated is complete and if $D$ is the Dirac operator, then $i^q D$ is essentially Krein-self-adjoint. In particular, if $\M$ is compact, then $i^q D$ is always essentially Krein-self-adjoint.
\end{proposition}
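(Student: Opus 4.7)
\medskip

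\noindent\textbf{Proof plan.} My plan is to reduce Krein-self-adjointness to ordinary Hilbert-space self-adjointness, then to recognise the resulting operator as a genuine Riemannian Dirac operator to which Wolf's/Chernoff's theorem on complete manifolds applies.

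First I would invoke the identity $A^{+}=\J A^{*}\J$ recorded just before Proposition \ref{propslr}, so that $i^{q}D$ is essentially Krein-self-adjoint if and only if $\J_{r}\,i^{q}D$ (equivalently $i^{q}D\,\J_{r}$) is essentially self-adjoint on the Hilbert space associated with $\J_{r}$ via the positive-definite inner product $\scal{\,\cdot\,,\,\cdot\,}_{\J_{r}}$. Note this Hilbert space agrees, up to equivalence of norms, with $L^{2}(\M,S,d\mu_{g^{r}})$ for the Riemannian density coming from $g^{r}$, since $g$ and $g^{r}$ differ only by the sign in the $q$ negative directions and $\abs{\det g}=\abs{\det g^{r}}$.

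The core step is to identify $\J_{r}\,i^{q}D$ with the Dirac operator of the Riemannian spin manifold $(\M,g^{r})$. Locally, choose an orthonormal frame $\set{e_{q+1},\dots,e_{q+p}}$ of $T\M_{+}^{*p}$ and $\set{e_{1},\dots,e_{q}}$ of $T\M_{-}^{*q}$. The flat Clifford generators satisfy $(\gamma^{a})^{2}=-1$ on the negative directions and $(\gamma^{b})^{2}=+1$ on the positive ones. A direct computation using the Clifford relations shows that $i^{q}\,\J_{r}\gamma^{a}$ anticommutes as a Riemannian Clifford generator: the factors of $i$ packaged into $\J_{r}=i^{q(q+1)/2}\gamma^{1}\cdots\gamma^{q}$ together with $i^{q}$ are precisely what is needed to turn the pseudo-Riemannian symbols $\gamma^{\mu}$ into symbols $\tilde\gamma^{\mu}$ satisfying $(\tilde\gamma^{\mu})^{2}=+1$ with respect to $g^{r}$. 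Because the spin connection depends only on the metric and the chosen orthonormal frame (and the negative directions have been reflected by $r$), $\nabla^{S}$ relative to $g$ agrees with $\nabla^{S_{r}}$ relative to $g^{r}$ up to terms that cancel against the Clifford reweighting. The outcome is that $\J_{r}\,i^{q}D$ equals, up to an overall sign and the natural identification of spinor bundles, the Riemannian Dirac operator $D^{r}$ of $(\M,g^{r})$.

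Once this identification is in place, essential self-adjointness follows from the classical result of Wolf/Chernoff: the Riemannian Dirac operator on a spin manifold is essentially self-adjoint on compactly supported smooth spinors whenever the underlying Riemannian metric is complete, since finite propagation speed for the wave equation generated by $D^{r}$ combined with the symmetry of $D^{r}$ suffices to conclude by the Chernoff criterion. This gives the first assertion. For the second assertion, on a compact $\M$ one can always construct a spacelike reflection by choosing a smooth splitting $T\M=T\M_{+}^{p}\oplus T\M_{-}^{q}$ (locally available, globalised by a partition of unity and orthogonalisation), and any Riemannian metric on a compact manifold is automatically complete, so the hypothesis of the first assertion is always met.

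The main obstacle I anticipate is the bookkeeping in the middle step: one must verify carefully that the specific phase $i^{q(q+1)/2}$ in $\J_{r}$ together with the overall $i^{q}$ really does yield a Hermitian (rather than anti-Hermitian) operator with the correct Riemannian Clifford symbol, and that the pseudo-Riemannian and Riemannian spin connections are intertwined by the frame change induced by $r$. This is essentially the algebraic content of a Wick rotation performed fibrewise, and it is precisely the computation carried out in the references \cite{Stro,BaumG}.
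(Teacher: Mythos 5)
A preliminary remark on the comparison itself: the paper does not prove Proposition \ref{cpt1} --- it is imported from Strohmaier and Baum, and Section \ref{Krein} explicitly defers all proofs to those references. So your proposal can only be measured against the argument in the cited sources, and in outline it is that argument: the reduction of Krein-self-adjointness of $i^qD$ to self-adjointness of $\J_r\, i^qD$ in the Hilbert space $\scal{\,\cdot\,,\J_r\,\cdot\,}$, and the conclusion via Chernoff's finite-propagation-speed criterion on the complete manifold $(\M,g^r)$ (with the compact case handled by the pointwise eigenspace splitting giving a spacelike reflection and by automatic completeness), are both exactly right.

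The gap is in your middle step. It is not true in general that $\J_r\, i^qD$ \emph{equals} the Riemannian Dirac operator $D^r$ of $(\M,g^r)$ up to a sign and an identification of spinor bundles: the Levi-Civita connections of $g$ and $g^r$ genuinely differ whenever the distributions $T\M_\pm$ are not parallel, so the two spin connections differ by a Clifford-algebra-valued one-form and $\J_r\, i^qD - D^r$ is a nontrivial zeroth-order operator (a bundle endomorphism). Your phrase ``up to terms that cancel against the Clifford reweighting'' is where this is swept under the rug. The point matters because on a noncompact manifold that endomorphism need not be bounded as an operator, so you cannot first establish essential self-adjointness of $D^r$ and then transfer it to $\J_r\, i^qD$ by a perturbation argument. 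The repair is standard and is what Baum actually does: apply Chernoff's criterion directly to $\J_r\, i^qD$ itself. It is a symmetric first-order differential operator on $\Gamma_0^\infty(S)$ (the symmetry is the integration-by-parts computation you would have to do in any case), and its principal symbol $\sigma(\xi)=i^q\J_r\, c(\xi)$ satisfies $\sigma(\xi)^2=g^r(\xi,\xi)$ --- your bookkeeping with the phases $i^{q(q+1)/2}$ and $i^q$ is precisely what verifies this --- so it has unit propagation speed with respect to the complete metric $g^r$, and Chernoff's theorem yields essential self-adjointness with no identification with $D^r$ required.
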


From all these properties, we can introduce the definition of a pseudo-Riemannian spectral triple:

\begin{definition}[\cite{Stro}]
A  pseudo-Riemannian Spectral Triple \mbox{$(\A,\H,D)$} is the data of:
\begin{itemize}
\item A Krein space $\mathcal{H}$,
\item A pre-$C^*$-algebra $\mathcal{A}$ with a faithful representation as bounded multiplicative operators on $\mathcal{H}$ and such that \mbox{$a^*=a^+$},
\item A Krein-self-adjoint operator $D$ on $\mathcal{H}$ such that all commutators $[D,a]$ are bounded for every $a\in\A$.
\end{itemize}
\end{definition}

In addition, we can also assume the existence of a fundamental symmetry $\J$ which commutes with all elements in $\A$. In this case, $\mathcal{A}$ becomes a subalgebra of $\B(\H)$ and the involution $a^*$ corresponds to the adjoint for the Hilbert space defined by $\J$. We can notice that in the commutative case, with this definition of pseudo-Riemannian spectral triple as given in \cite{Stro}, the operator $D$ corresponds to the usual Dirac operator times a factor $i^q$.

The compact resolvent condition is not present in this definition. This comes from the fact that, on a pseudo-Riemannian manifold, the Dirac operator $D$ is not elliptic. Its principal symbol satisfies the relation \mbox{$\sigma^D(\xi)^2 = c^2(\xi) = g(\xi,\xi)$} and so it is not invertible any more. In order to recover a similar condition, we define
\[\Delta_\J = \prt{1+ [D]_\J^2}^{\frac 12}\]
with $[D]_\J^2 = \frac{DD^*+D^*D}{2}$. This operator is elliptic of order $1$ since $\sigma^{\Delta_\J}(\xi)^2  = g^r(\xi,\xi)$, and is self-adjoint for the $\J$-product. The compact condition is then required on $\Delta_\J^{-1} = \prt{1+ [D]_\J^2}^{-\frac {1}{2}}$ and this is independent of the choice of the fundamental symmetry $\J$ \cite{Stro}.

\section{Temporal Lorentzian Spectral Triples}\label{TLSP}

Pseudo-Riemannian spectral triples are a good tool in order to deal with general pseudo-Riemannian signatures, but the definition in itself does not allow us to control the signature of the metric. In particular, it is not possible to highlight spectral triples corresponding to Lorentzian spaces, which contain interesting new properties like causality and are of great importance to physicists. We present here an adaptation of the axioms of pseudo-Riemannian spectral triples corresponding to noncompact globally hyperbolic Lorentzian manifolds. The definition of temporal Lorentzian spectral triples is presented as a working basis, since the set of axioms is not fixed and could evolve to include more properties.

We start by some considerations about the construction of the algebra. For a noncompact manifold, the dual algebra for Gel'fand theory is the nonunital algebra $C_0(\M)$ of continuous functions vanishing at infinity. The noncompactness of the manifold is mandatory in order to have a causal structure. Indeed, for a Lorentzian manifold without boundary, compactness implies the existence of points which are in the future of themselves \cite{Gallo,Tipler}, which is not physically realistic.

In order to deal with the noncompactness of the manifold, we will follow the construction in \cite{Gayral} for noncompact Riemannian spectral triples. We consider two pre-$C^*$-algebras $\A \subset \widetilde\A$ with a similar faithful representation on a Hilbert space $\H$. The algebra $\A$ is non-unital while the algebra $\widetilde\A$ is a preferred unitization of $\A$. Moreover, we require that $\A$ is an ideal of $\widetilde\A$ (note that a weaker axiom can be used by requiring that $\overline\A$ is an ideal of $\overline{\widetilde\A}$). Such a preferred unitization can be seen as a sub-algebra of the multiplier algebra of $\A$.

In the commutative case, those algebras can be constructed in the following way. Let us consider a noncompact Lorentzian spin manifold $\M$. We consider the Hilbert space $\H = L^2(\M,S)$ of square integrable spinor sections over $\M$ and the Dirac operator $D = -i(c \circ \nabla^S) = -i \gamma^{\mu} \nabla^S_\mu$. Then $\A \subset C^\infty_0(\M)$ and $\widetilde\A \subset C^\infty_b(\M)$ are some appropriate sub-algebras of the algebra of smooth functions vanishing at infinity and the algebra of smooth bounded functions in order to keep the property that $\forall a\in\widetilde\A$, $[D,a]$ is bounded. As an example, on Minkowski spacetime, one can take $\A$ to be the space of Schwartz functions and $\widetilde\A$ the space of smooth bounded functions with bounded derivatives \cite{Gayral}.

The main difficulty in defining a pseudo-Riemannian spectral triple corresponding to a Lorentzian manifold is to give a constraint on the signature such that only one timelike direction remains possible. In the construction we suggest here, such a constraint will take the form of an element representing a global time and defining by itself the needed fundamental symmetry in order to recover the self-adjointness of the Dirac operator. In the commutative case, we will suppose that the corresponding manifold respects the condition of global hyperbolicity.

From now, we suppose that $\M$ is a noncompact globally hyperbolic Lorentzian manifold admitting a spin structure. Since $\M$ is globally hyperbolic, $\M$ admits at least a Cauchy temporal function $\T$, which is a smooth time function (function strictly increasing along each future-directed causal curve) with past-directed timelike gradient everywhere and such that level sets are Cauchy surfaces \cite{BS03,BS04,BS06}. The Lorentzian metric admits a globally defined  orthogonal splitting 
\[g = -u \,d\T^2 + g_\T,\]
where $g_\T$ is a Riemannian metric on each level set of $\T$ and $u$ is a smooth positive function on $\M$ (we use here a signature of the type $(-,+,+,+,\dots)$). The function $\T$ is in general unbounded and thus does not belong to $\widetilde\A$ but its smooth inverse $\prt{1+ \T^2}^{-\frac{1}{2}}$ is a smooth bounded function.
 
At first, we consider a specific conformal transformation of the metric \mbox{$\tilde g = \frac 1u g$} and we suppose that the spin structure corresponds to the metric $\tilde g$. Such a conformal transformation induces an orthonormal splitting of the metric
 \[\tilde g = - d\T^2 + \tilde g_\T.\]
This splitting on the metric induces a splitting on the tangent (and the cotangent) bundle \mbox{$T\M = T\M_- \oplus T\M_+$}, where the subbundle $T\M_-$ has a one dimensional fiber generated by the gradient $\nabla\T$. So the temporal function $\T$ defines a spacelike reflection, with the associated Riemannian metric being \mbox{$\tilde g^r = d\T^2 + \tilde g_\T$}. Since $\nabla\T$ is a generating element of $T\M_-$ of norm one, $d\T$ is a generating element of $T\M_-^*$ of norm one. From the Proposition \ref{propslr}, we know that each spacelike reflection generates a fundamental symmetry $\J$ defined from the Clifford action $c$ on a local oriented orthonormal basis \mbox{$\set{e}$} of $T\M_-^{*}$ by $\J = ic(e)$. Since \mbox{$\set{d\T}$} is an orthonormal basis of $T\M_-^{*}$, we have that
\[\J = ic(d\T) = i\gamma^0\]
 is a fundamental symmetry, and if the manifold $\M$ is complete under the metric \mbox{$\tilde g^r = d\T^2 + \tilde g _\T$}, by the Proposition \ref{cpt1} $iD$ is Krein-self-adjoint for the Krein space defined by $\J$, which is equivalent to the fact that $\J D$ is a skew-self-adjoint operator in $\H$.
 
Then, we can recall that the Dirac operator respects the following known property \cite{Var}:
\[\forall f\in C^\infty(\M), \qquad [D,f] = -i\,c(df).\]
From that, we obtain that 
\[\J = - [D,\T]\] is a fundamental symmetry of the Krein space. The condition on the Krein-self-adjointness of $iD$ becomes a condition on the skew-self-adjointness of $[D,\T]D$ on the Hilbert space $\H$ and can be written as $D^* = -\J D \J = - [D,\T]D[D,\T]$.

Such a construction is satisfactory if we want information about the causal structure of the Lorentzian manifold, since the causality is unchanged by conformal transformation. However, in order to keep the metric aspect, one would like to extend the construction to more general conformal transformations. This can be done by relaxing the condition $\J^2 = 1$. Indeed, for a general metric $g = -u \,d\T^2 + g_\T$ and the corresponding spin structure, we can define a more general fundamental operator $J_u= - [D,\T] = i\gamma^0$ where $\gamma^0$ is now the curved Dirac matrix such that $\gamma^0\gamma^0 = g^{00} 1 = -u 1$. Since the fundamental symmetry is only modified by a Hermitian element $\J \rightarrow \J_u={u}^{\frac 12}\J$ with $u>0$, it still defines a Krein space where the operator $iD$ is Krein-self-adjoint, but it has lost its symmetric property and respects instead the more general condition $\J^2 = u1$. In order to fit our $C^*$-algebra formalism, we will restrict the conformal factors $u$ to elements in $\widetilde\A$, such that $\J^2$ can be seen as a positive (Hermitian invertible) element of $\widetilde\A$.

We now propose the following set of axioms for a Lorentzian spectral triple corresponding to a (possibly noncommutative) globally hyperbolic Lorentzian manifold.

\begin{definition}
A Temporal Lorentzian Spectral Triple \mbox{$(\A,\widetilde\A,\H,D,\T)$} is the data of:
\begin{itemize}
\item A Hilbert space $\H$,
\item A nonunital pre-$C^*$-algebra $\A$ with a faithful representation as bounded multiplicative operators on $\H$,
\item A preferred unitization $\widetilde\A$ of $\A$ with a similar faithful representation as bounded multiplicative operators on $\H$ and such that $\A$ is an ideal of $\widetilde\A$,
\item An unbounded self-adjoint operator $\T$ on $\H$ such that \mbox{$\prt{1+ \T^2}^{-\frac{1}{2}}\in \widetilde\A$},
\item An unbounded operator $D$ densely defined on $\mathcal{H}$ such that:
\begin{itemize}
\item all commutators $[D,a]$ are bounded for every $a\in\widetilde\A$,
\item $[D,\T]$ is a self-adjoint operator on $\H$ which commutes with every element in $\widetilde\A$ and such that $[D,\T]^2 = u \in \widetilde A$ with $u>0$,
\item \mbox{$[D,\T]D$} is a skew-self-adjoint operator on $\H$,
\item $\forall\,a\in\mathcal{A}$,   $a(1 + \scal{D}^2)^{-\frac 12}$ is compact, with\\ $\scal{D}^2 = -\frac{1}{2}\prt{D [D,\T] D [D,\T]+ [D,\T] D [D,\T] D}$.
\end{itemize}
\end{itemize}
\end{definition}

\begin{definition}
A temporal Lorentzian spectral triple \mbox{$(\A,\widetilde\A,\H,D,\T)$} is even if there exists a $\mathbb Z_2$-grading $\gamma$ such that $\gamma^*=\gamma$, $\gamma^2=1$, $[\gamma,a] = 0\ \forall a\in\widetilde\A$, \mbox{$\gamma \T = \T \gamma$} and \mbox{$\gamma D =- D \gamma $}.
\end{definition}

A temporal Lorentzian spectral triple is a pseudo-Riemannian spectral triple corresponding to a manifold with Lorentzian signature and admitting a global time, where the usual fundamental symmetry (usual in the sense of $\J^2=1$) is completely determined by the Dirac operator $D$ and the global time $\T$ by $\J=- u^{-\frac{1}{2}}[D,\T]$ with $u=[D,\T]^2$. We can notice that, when the algebra is noncommutative, the commutativity condition of $[D,\T]$ restricts the possibility of conformal transformations by requiring that $u$ belongs to the center of the algebra $Z(\widetilde\A)$. When the center is trivial, this reduces to the usual fundamental symmetry with $[D,\T]^2=1$ up to a constant scale. An example of construction of a temporal Lorentzian spectral triple in the noncommutative case will be given in Section \ref{NLMP}.

While the construction of a temporal Lorentzian spectral triple comes from the properties of Lorentzian manifolds with global hyperbolicity, the following proposition shows us that the Lorentzian characteristics are recovered by this choice of axioms.

\begin{proposition}\label{proprec}
Let us assume that a temporal Lorentzian spectral triple whose algebra is commutative \mbox{$(\A,\widetilde\A,\H,D,\T)$} corresponds to a pseudo-Riemannian spin manifold $(\M,g)$ in the usual way. Then the geometry is Lorentzian and the metric admits a global splitting.
\end{proposition}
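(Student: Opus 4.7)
My plan is to decode each operator-algebraic axiom into a pointwise statement about the Clifford algebra on $\M$ and then combine the resulting restrictions to pin down both the signature of $g$ and its orthogonal splitting, working locally at an arbitrary point $x\in\M$ and invoking that the signature is locally constant on a connected manifold. In the commutative case, $\T$ is multiplication by a smooth real function, so the identity $[D,f]=-ic(df)$ gives $[D,\T]=-ic(d\T)$, and the Clifford relation $c(\xi)^{2}=g(\xi,\xi)$ yields $[D,\T]^{2}=-g(d\T,d\T)$; the axiom $[D,\T]^{2}=u>0$ thus forces $g(d\T,d\T)=-u<0$ everywhere, so $d\T$ is nowhere zero and timelike. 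Self-adjointness of $[D,\T]=-ic(d\T)$ is equivalent to $c(d\T)^{*}=-c(d\T)$; writing $c(d\T)=(d\T)_\mu\gamma^\mu$ in a local orthonormal frame and using the convention $(\gamma^\mu)^{*}=\epsilon_\mu\gamma^\mu$ from the footnote of Section~\ref{Krein} (with $\epsilon_\mu=+1$ on spacelike and $\epsilon_\mu=-1$ on timelike indices), this equation decouples across $\mu$ and forces $(d\T)_\mu=0$ whenever $\epsilon_\mu=+1$, so $d\T(x)\in T\M_{-}^{*}$ pointwise.

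\emph{Signature is Lorentzian.} Fix $x$ and choose an orthonormal frame at $x$ with $d\T(x)=\alpha\,dx^{0}$ and $\alpha=\sqrt{u(x)}>0$; label the remaining signature entries $\epsilon_{a}\in\{\pm 1\}$ for $a=1,\dots,n-1$. Since $[D,\T]$ is zero-order and the principal symbol of $D$ is $c(\xi)$, the principal symbol of $[D,\T]D$ at $(x,\xi)$ is $\sigma(x,\xi)=-i\alpha\,\gamma^{0}c(\xi)$. Skew-self-adjointness of $[D,\T]D$ forces $\sigma(x,\xi)^{*}=-\sigma(x,\xi)$ for every $\xi$; expanding $c(\xi)=\xi_{0}\gamma^{0}+\sum_{a}\xi_{a}\gamma^{a}$, using $(\gamma^{a})^{*}=\epsilon_{a}\gamma^{a}$ and the anticommutation $\gamma^{0}\gamma^{a}=-\gamma^{a}\gamma^{0}$ for $a\neq 0$, the condition collapses to
\[
\sum_{a\neq 0}(\epsilon_{a}-1)\,\xi_{a}\,\gamma^{a}\gamma^{0} \eq 0.
\]
The bivectors $\gamma^{a}\gamma^{0}$ are linearly independent in the Clifford algebra and $\xi_{a}$ is arbitrary, so $\epsilon_{a}=+1$ for every $a\neq 0$. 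The signature at $x$ is therefore $(n-1,1)$, and local constancy promotes this to a globally Lorentzian metric.

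\emph{Global splitting.} With signature $(n-1,1)$ and a nowhere-zero timelike $d\T$, the line bundle $\Span(\nabla\T)$ and its orthogonal complement $T\M_{+}=\ker d\T$ furnish a global orthogonal decomposition $T\M=\Span(\nabla\T)\oplus T\M_{+}$. The metric is negative on the first summand, since $g(\nabla\T,\nabla\T)=g(d\T,d\T)=-u$, and positive definite on the second as the orthogonal complement of a timelike line in a Lorentzian manifold, so $g_{\T}:=g|_{T\M_{+}}$ restricts to a Riemannian metric on every level set of $\T$. Writing $V=-\frac{1}{u}d\T(V)\,\nabla\T+V^{\perp}$ and using $d\T(\nabla\T)=-u$ gives the advertised splitting $g=-\frac{1}{u}\,d\T\otimes d\T+g_{\T}$ on all of $\M$, which is the global splitting claimed in the statement.

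The first and third paragraphs are routine Clifford bookkeeping. The hardest step will be the signature argument: the pointwise algebraic constraint that kills every additional timelike direction has to be extracted from the skew-self-adjointness of a first-order differential operator. The principal-symbol computation only yields a necessary condition, which is exactly what is needed here, but one still has to verify that the adjoint convention $(\gamma^\mu)^{*}=\epsilon_\mu\gamma^\mu$ used above is indeed the one induced on $L^{2}(\M,S)$ by the positive-definite Hilbert-space structure posited in the axioms, rather than by the Krein structure which is only recovered afterwards.
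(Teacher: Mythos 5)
Your proposal is correct and follows essentially the same route as the paper: you decode $[D,\T]^2=u>0$ into a nowhere-vanishing negative-norm $d\T$, use the self-adjointness of $[D,\T]$ to fix the anti-Hermiticity of $\gamma^0$, and extract from the skew-self-adjointness of $[D,\T]D$ the relation $c(\xi)^*\gamma^0+\gamma^0 c(\xi)=0$, which kills every additional timelike direction and yields $g^{0\mu}=0$. The only differences are cosmetic — you work pointwise in an orthonormal frame with the principal symbol and linear independence of the bivectors $\gamma^a\gamma^0$, where the paper works in the global coordinate $x^0=\T$, discards a divergence term, and argues by contradiction with the Clifford relations — and your closing caveat about the adjoint convention is exactly the convention the paper also adopts.
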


\begin{proof}
We work with a spin structure on $(\M,g)$ such that the Dirac operator reads $D = -i \gamma^{\mu} \nabla^S_\mu$ where the gamma matrices $\gamma^{\mu} = c(dx^\mu)$ are chosen to be either Hermitian or anti-Hermitian depending on the signs of the metric and where the Clifford relations are $\gamma^{\mu} \gamma^{\nu} + \gamma^{\nu} \gamma^{\mu} = 2 g^{\mu\nu}$.

For arbitrary coordinates, we have $[D,\T]=- i \gamma^\mu \partial_\mu\T$, so the condition $[D,\T]^2 = u >0$ can only be respected if the gradient of $\T$ vanishes nowhere. Hence, $\T$ can be chosen as a first global coordinate $x^0=\T$ and we get $[D,\T]=- i \gamma^0$. Then $[D,\T]^2 = - g^{00}\, 1  > 0 \implies g^{00} < 0 $. We can notice that $\gamma^0$ is forced to be anti-Hermitian due to the self-adjointness of $[D,\T]$.\\
From the skew-self-adjoint condition $([D,\T]D)^*+[D,\T]D = 0$ and discarding a divergence term (in a similar way to \cite[Proposition 9.13]{Var}), we get
$$  
-i (\gamma^\mu)^*  \gamma^0 (-i \nabla^S_\mu)^*- i \gamma^0 \gamma^\mu  (-i \nabla^S_\mu) = -\left( (\gamma^\mu)^*  \gamma^0 +  \gamma^0 \gamma^\mu \right) \nabla^S_\mu = 0.
$$
Here we can discuss the Hermicity of the matrices $\gamma^\mu$ for $\mu>0$. If some matrix $\gamma^\mu$ has the same Hermicity that $\gamma^0$ (so is anti-Hermitian), we get $\gamma^\mu  \gamma^0 =  \gamma^0 \gamma^\mu$ which implies $ \gamma^\mu \gamma^0 = g^{\mu 0}\,1 $ from the Clifford relations, which is impossible since  $\gamma^0 \gamma^\mu$ cannot be a multiple of the identity matrix for $\mu \neq 0$. Hence all matrices $\gamma^\mu$ are Hermitian for $\mu > 0$ and they correspond to positive eigenvalues of the metric. Therefore, the geometry is Lorentzian. Moreover, we get $$\gamma^\mu  \gamma^0 +  \gamma^0 \gamma^\mu = 0 = g^{\mu0}\,1 = g^{0\mu}\,1$$ and the metric admits a global splitting.
\end{proof}

We must remark that the above proposition is not a reconstruction theorem for Lorentzian manifolds. The meaning is that, assuming that there exists a reconstruction theorem for pseudo-Riemannian spin manifolds, then the chosen class of fundamental symmetries $\J=- u^{-\frac{1}{2}}[D,\T]$ restricts the possibilities to those manifolds with Lorentzian signature. Also, the chosen class of fundamental symmetries does not cover all possible Lorentzian spin manifolds but only those which admit a global splitting.

In the rest of this section, we will show that the above definition of a temporal Lorentzian spectral triple is suitable for the definition of an extension of the algebra to unbounded elements. Indeed, one key element about Lorentzian noncommutative geometry is the set of causal functions, which are real-valued functions that do not decrease along every causal future-directed curve. In particular, the sets of time or temporal functions are subsets of the set of causal functions. Those causal functions are used in order to give information about causality and metric aspects (see \cite{F6,F3,F4,PZ,Mor,Bes}). In particular, the construction of a Lorentzian distance function requires some functions with a minimum growth rate along causal future-directed curves \mbox{$g( \nabla f, \nabla f ) \leq -1$} \cite{F4}, which is incompatible with the boundedness condition of the elements in $\widetilde\A$. So we would like to construct an algebra which could contain unbounded causal functions in order to keep information about time, Lorentzian distance and causality. However, in this case, the supremum norm (and a fortiori $L^p$ norms) cannot be used to define a Banach algebra. The needed unbounded elements are clearly in the $^*$-algebra $C^\infty(\M)$. Since smooth functions are locally integrable, we can use a structure defined for the space of locally integrable functions $L^1_{\text{loc}}(\M)$. Such a structure is provided by the theory of partial inner product spaces (PIP-spaces). We give here a few elements of this theory. A complete introduction, including the topological aspects, can be found in \cite{Antoine}.

\begin{definition}[\cite{Antoine}]
A linear compatibility relation on a vector space $V$ is a symmetric binary relation $f\# g$ which preserves linearity. For $S\subset V$, we can define the vector subspace $S^\#=\set{g\in V : g\# f, \forall f\in S} \subset V$ which respects the inclusion property \mbox{$S \subset \prt{S^\#}^\#$}. Vector subspaces such that \mbox{$S = \prt{S^\#}^\#$} are called assaying subspaces.
\end{definition}

The family of assaying subspaces forms a lattice  with the inclusion order, and the meet and join operations given by $S_1 \wedge S_2 =S_1 \cap S_2$ and  $S_1 \vee S_2 = \prt{S_1 + S_2}^{\#\#}$. Usually, it is enough to consider only an indexed sublattice \mbox{$\I = \set{S_r : r\in\I}$}  which covers $V$, with an involution defined on the index $I$ by $\prt{S_r}^\# = S_{\overline r}$.

\begin{definition}[\cite{Antoine}]
A partial inner product on $(V,\#)$ is a Hermitian form $\scal{\,\cdot\,,\,\cdot\,}$ defined exactly on compatible pairs of vectors $f\# g$. A partial inner product space (PIP-space) is a vector space $V$ equipped with a linear compatibility relation and a partial inner product. An indexed PIP-space is a PIP-space with a generating involutive indexed lattice of assaying subspaces.
\end{definition}

The space \mbox{$L^1_{\text{loc}}(\M)$} is a PIP-space with the compatibility relation given by
\[f\# g\quad\lequi\quad \int_\M \abs{fg} d\mu_g \ <\  \infty\]
and the partial inner product
\[\scal{f,g} = \int_\M f^*g \;d\mu_g .\]

We can define an involutive indexed lattice of assaying subspaces by using weight functions. If, for $r,r^{-1} \in L^2_{\text{loc}}(\M)$ with $r$ a.e.~positive Hermitian, we define the space $L^2(r)$ of measurable functions $f$ such that $fr^{-1}$ is square integrable, i.e.
\[L^2(r) = \set{f\in L^1_{\text{loc}}(\M) : \int_\M \abs{f}^2 r^{-2} \;d\mu_g \;<\; \infty},\]
then the family \mbox{$\I = \set{L^2(r)}_r$} respects $L^1_{\text{loc}} = \bigcup_{r} L^2(r)$ and is a generating lattice of assaying subspaces, with an involution defined by $\overline{r} = r^{-1}$ since \mbox{$L^2(r)^\# = L^2(r^{-1})$}.

Actually, we have a realization of this indexed PIP-space as a lattice of Hilbert spaces \mbox{$\set{\H_r}_r$} where each \mbox{$\H_r = L^2(r)$} is endowed with the positive definite Hermitian inner product
\[\scal{f,g}_r = \int_\M f^*g \; r^{-2} \;d\mu_g .\]
In the following, we will denote such an indexed PIP-space by $\bigcup_{r} \H_r$, with the so called center space $\H_0$ being the space of square integrable functions.

The space $C^\infty_b(\M)$ is a unital pre-$C^*$-algebra which acts as bounded multiplicative operators on $\H_0$, with the operator norm corresponding to the supremum norm \mbox{$\norm{f}_\infty = \sup_{x\in\M} \abs{f(x)}$}. We can make a similar modification to the supremum norm by introducing a smooth weight function $r$. We denote the algebra of bounded smooth functions by $\widetilde\A_0 = C^\infty_b(\M)$. We define the following lattice of spaces:
\[\widetilde\A_r = \set{ f\in C^\infty(\M) : \sup_{x\in\M} \abs{f(x)\;r^{-1}(x)} \;<\;\infty}\cdot\]

Except for the center algebra $\widetilde\A_0$, each $\widetilde\A_r $ is a vector space which does not have the structure of an algebra. Instead, those spaces have a structure of partial *-algebra, which means that the product $fg\in \widetilde\A_r$ is well defined only for a bilinear subset of $\widetilde\A_r \times \widetilde\A_r $. Actually, we have the grading property $fg\in \widetilde\A_r$ if $f\in \widetilde\A_s$ and $g\in \widetilde\A_t$, with $r=st$. Each $\widetilde\A_r $ is endowed with a norm $\norm{\,\cdot\,}_r = \norm{\,\cdot\ r^{-1}}_0$ , where  $\norm{\,\cdot\,}_0$ is the operator norm on $\widetilde\A_0$.
In the same way as the lattice of Hilbert spaces, we define the $^*$-algebra \mbox{$\bigcup_{r} \widetilde\A_r$} which obviously corresponds to the algebra of smooth functions $C^\infty(\M)$ but endowed with a gradation.

It is trivial that \mbox{$ \bigcup_{r} \widetilde\A_r$} acts on the PIP-space $ \bigcup_{r} \H_r$ as multiplicative operators. Moreover, we have that each $a\in A^M_s$ acts as a bounded operator \mbox{$a : \H_r \rightarrow \H_{rs}$} if we consider the weighted norm.

So we see that it is possible to consider $C^\infty(\M)$ as a graded algebra of operators on some PIP-space, with a set of norms defined on the gradation. However, we do not need the whole space $C^\infty(\M)$. Indeed, the needed causal functions giving metric information are only unbounded "in the direction of time", which means that we do not really need functions which are growing indefinitely on spacelike surfaces. So we propose the following idea: the set of weight functions can be restricted in such a way that the filtered algebra \mbox{$\bigcup_{r} \widetilde\A_r$} contains some functions which are growing indefinitely only along causal curves. A typical way is to restrict $C^\infty(\M)$ to the functions which satisfy a particular growth rate along causal curves. The final algebra can be constructed by the use of the global time function $\T$:
\begin{itemize}
\item $\widetilde\A_0 = C_b^\infty(\M)$ is the unital pre-$C^*$-algebra of smooth bounded functions on $\M$.
\item $\widetilde\A_n = \set{ f\in C^\infty(\M) : \sup_{x\in\M} \abs{f(x)\; \prt{1+\T(x)^2}^{\frac n2}} \;<\;\infty}, n\in\setN$.
\item \mbox{$\bigcup_{n\in\setZ} \widetilde\A_n$} is the unital filtered algebra of smooth  functions on $\M$ of polynomial growth along timelike curves which are bounded on each Cauchy surface $\T^{-1}(t)$, $t\in\setR$.
\end{itemize}

This algebra has a large number of interesting properties, easily derived from the definition:
\begin{itemize}
\item Each set $\widetilde\A_n$ is simply generated from $\widetilde\A_0$ by $f\in\widetilde\A_n$ if and only if $f\prt{1+\T(x)^2}^{\frac n2} \in\widetilde\A_0$.
\item Each set $\widetilde\A_n$ is endowed with a norm $\norm{\,\cdot\,}_n = \norm{\,\cdot\ \prt{1+\T(x)^2}^{\frac n2}}_0$ where $\norm{\,\cdot\,}_0$ is the norm on $\widetilde\A_0$.
\item We have the (descending) filtration properties $\widetilde\A_n \subset \widetilde\A_{n-1}\; \forall n\in\setZ$ and
\[ab\in\widetilde\A_{n} \ \lequi\ \exists \,m,l\in\setZ\ :n = m + l \ ,\ a\in\widetilde\A_m, \ b\in\widetilde\A_l.\]
\item $\T\in\bigcup_{n\in\setZ} \widetilde\A_n$, and more precisely $\T\in\widetilde\A_{-1}$.
\item All smooth causal functions with polynomial growth are included in the algebra $\bigcup_{n\in\setZ} \widetilde\A_n$.
\item All continuous a.e.~differentiable causal functions with polynomial growth are in the closure algebra \mbox{$ \bigcup_{n\in\setZ} \overline{\widetilde\A_n}$}, where the closure is taken in each subset with respect to each norm $\norm{\,\cdot\,}_n$.
\end{itemize}

The polynomial growth is just a particular choice. For example, one can define a similar filtered algebra with causal functions of exponential growth by taking a weight function like \mbox{$e^{\alpha \abs{\T}}$}, $\alpha\in\setR$. The choice of the center unitization is also arbitrary since $\widetilde\A_0$ can be replaced by another suitable unitization.

Now, we can remember that the pre-$C^*$-algebra $\A = C_0^\infty(\M)$, as well as its unitization $\widetilde\A=\widetilde\A_0 = C_b^\infty(\M)$, act as multiplicative bounded operators on the Hilbert space $\H = L^2(\M,S)$ of square integrable spinor sections over $\M$. From that, we can construct an indexed PIP-space $ \bigcup_{n\in\setZ} \H_n$ of spinor sections over $\M$ which are square integrable under the weighted norm, and on which $ \bigcup_{n\in\setZ} \widetilde\A_n$ acts as a family of bounded operators.

Such a space corresponds to a scale of Hilbert spaces generated by the self-adjoint operator \mbox{$\prt{1+\T^2}^{\frac 12} \geq 1$}. Indeed, \mbox{$\T : \text{Dom}\prt{\T}  \rightarrow \H_0$} is an unbounded self-adjoint operator acting on $\H = \H_0= L^2(\M,S)$ with domain \mbox{$\text{Dom}\prt{\T}\subset \H_0$}, and \mbox{$\prt{1+\T^2}^{\frac 12}$} is an unbounded positive self-adjoint operator with similar domain \mbox{$\text{Dom}\prt{\T} = \text{Dom}\prt{\prt{1+\T^2}^{\frac 12}}$}.

We can define the following Hilbert spaces:
\[\H_n = \bigcap_{k=0}^n \text{Dom}\prt{\prt{1+\T^2}^{\frac k2}} = \bigcap_{k=0}^n \text{Dom}(\T^k)\quad\forall n\in\setN.\]
The conjugate spaces \mbox{$\H_{\overline n} = \H_{-n}$}, $n\in\setN$ are just the topological duals of the spaces $\H_n$ (properly speaking, there are the identifications of the topological duals under the Hermitian inner product on $\H_0$).

Then we have the discrete scale of Hilbert spaces:
\[ \bigcap_{n\in\setZ} \H_n \subset \dots \subset \H_2 \subset \H_1 \subset \H_0 \subset \H_{-1} \subset \H_{- 2} \subset \dots \subset  \bigcup_{n\in\setZ} \H_n. \]

Each $\H_n$ is endowed with the positive definite Hermitian inner product
\[\scal{\psi,\phi}_n = \scal{\psi,\prt{1+\T^2}^{ n} \phi} =  \int_\M \psi^*\phi \, \prt{1+\T^2}^{ n}\,d\mu_g.\]

Then $ \bigcup_{n\in\setZ} \widetilde\A_n$ acts as a family of multiplicative operators on $ \bigcup_{n\in\setZ} \H_n$ with  \mbox{$a : \H_n \rightarrow \H_{n + m}$} bounded if \mbox{$a\in\widetilde\A_m$} and for each $n\in\setZ$.

This construction can be generalized to noncommutative spaces using temporal Lorentzian spectral triples. However, in order to conserve the properties of this construction, we need to impose some additional commutative conditions.

\vspace{0.2cm}

\begin{definition}
A temporal Lorentzian spectral triple \mbox{$(\A,\widetilde\A,\H,D,\T)$} has commutative time if the operators $\prt{1+\T^2}^{\frac 12}$ and \mbox{$[D,\prt{1+\T^2}^{\frac 12}]$} commute with all elements in $\widetilde\A$.\\
\end{definition}

\vspace{0.1cm}

\begin{proposition}
If a temporal Lorentzian spectral triple \mbox{$(\A,\widetilde\A,\H,D,\T)$} has commutative time, then:
\begin{itemize}
\item $\T$ generates a filtered algebra \mbox{$\bigcup_{n\in\setZ} \widetilde\A_n$} by \vspace{-0.2cm}
\[ \vspace{-0.2cm} \widetilde\A_0 = \widetilde\A,\qquad a\in \widetilde\A_{n+1}\quad\text{ if and only if }\quad  \prt{1+ \T^2}^{\frac{1}{2}} a \in \widetilde\A_{n},\] with a norm \mbox{$\norm{\,\cdot\,}_n = \norm{(1+\T^2)^{\frac n2}\ \cdot\,}$} defined on each $ \widetilde\A_n$.
\item $\bigcup_{n\in\setZ} \widetilde\A_n$ has a faithful representation as a family of multiplicative operators on the indexed PIP-space $\bigcup_{n\in\setZ} \H_n$ defined by \vspace{-0.2cm}
\[ \vspace{-0.2cm} \H_0=\H,\quad \quad\H_n = \bigcap_{k=0}^n \text{Dom}(\T^k)\quad\forall n > 0\]
\[\text{and }\  \H_{-n} =  \H_{\overline n} \ \text{ the conjugate dual of }\ \H_n\]
with \mbox{$a : \H_n \rightarrow \H_{n+m}$} bounded if \mbox{$a\in\widetilde\A_m$} and with a positive definite inner product  \mbox{$\scal{\,\cdot\,,\,\cdot\,}_{n}=\scal{\,\cdot\,,\prt{1+ \T^2}^{n}\,\cdot\,}$} defined on each $\H_n$.
\end{itemize}
Moreover, the norms \mbox{$\norm{a}_m = \norm{(1+T^2)^{\frac m2} a}$} and \mbox{$\norm{[D,a]}_m = \norm{(1+T^2)^{\frac m2} [D,a]}$} for $a\in \widetilde\A_m$ correspond to the operator norm on $\bigcup_{n\in\setZ} \H_n$.
\end{proposition}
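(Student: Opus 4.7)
The plan is to use the spectral calculus of the self-adjoint operator $\T$. The operator $S:=(1+\T^2)^{1/2}$ is positive, self-adjoint, and invertible with bounded inverse $S^{-1}\in\widetilde\A$ by axiom. By commutative time, both $S$ and $[D,S]$ commute with every element of $\widetilde\A$. This commutativity propagates: functional calculus on $\T$ gives that every power $S^k$ commutes with $\widetilde\A$, and a short Leibniz-rule computation using $[D,aS]=[D,Sa]$ together with $[[D,S],a]=0$ yields $[S,[D,a]]=[[D,S],a]=0$, so $[D,a]$ also commutes with every $S^k$ for $a\in\widetilde\A$.

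First I would unfold the recursion to the explicit characterization $\widetilde\A_n=\{a:S^na\in\widetilde\A\}$ for $n\geq 0$, and dually $\widetilde\A_{-n}=S^n\widetilde\A$ for $n>0$. The descending filtration $\widetilde\A_n\subset\widetilde\A_{n-1}$ is immediate from $S^{-1}\in\widetilde\A$, and the multiplicative grading $\widetilde\A_m\widetilde\A_l\subset\widetilde\A_{m+l}$ follows from the factorization $S^{m+l}ab=(S^ma)(S^lb)$ using the commutation of $S$ with $a$. The commutation with $\widetilde\A$ and with $[D,\cdot]$ then extends to every $\widetilde\A_m$. For the Hilbert side, spectral theory identifies $\bigcap_{k=0}^n \text{Dom}(\T^k)$ with $\text{Dom}(S^n)$, which is a Hilbert space under the graph norm $\norm{S^n\psi}^2=\scal{\psi,(1+\T^2)^n\psi}$; closedness of $S^n$ gives completeness. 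The negative-index spaces $\H_{-n}$ arise as the conjugate dual of $\H_n$ via the Riesz identification under the center pairing, yielding a discrete Hilbert scale with dense continuous inclusions.

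The heart of the argument is the norm identity. For $a\in\widetilde\A_m$ and $\psi\in\H_n$, commutation gives
\[ \norm{a\psi}_{n+m}=\norm{S^{n+m}a\psi}=\norm{(S^ma)(S^n\psi)}\leq\norm{S^ma}\cdot\norm{\psi}_n=\norm{a}_m\cdot\norm{\psi}_n, \]
so the operator norm from $\H_n$ to $\H_{n+m}$ is bounded by $\norm{a}_m$. The reverse inequality follows by substituting $\phi=S^{-n}\psi$ for arbitrary $\psi\in\H_0$ of unit norm, which yields $\norm{\phi}_n=1$ and $\norm{a\phi}_{n+m}=\norm{S^ma\,\psi}$; taking the supremum over $\psi$ recovers $\norm{a}_m$. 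The identical computation with $[D,a]$ in place of $a$, valid because $[D,a]$ commutes with $S^k$, gives the norm identity for commutators. Faithfulness transfers from $\widetilde\A$ through the bijection $a\mapsto S^ma$.

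The main obstacle is the boundedness of $S^m[D,a]$ on $\H_0$ for $a\in\widetilde\A_m$, needed so that $\norm{[D,a]}_m=\norm{S^m[D,a]}$ is a genuine operator norm. One writes
\[ S^m[D,a]=[D,S^ma]-[D,S^m]\,a, \]
whose first term is bounded because $S^ma\in\widetilde\A$. The second term is controlled via a functional-calculus expansion of $[D,S^m]$ in terms of $[D,\T]$, which is itself bounded because $[D,\T]^2=u\in\widetilde\A$, combined with the fact that the functions $\T^j(1+\T^2)^{-m/2}$ are bounded for $j\leq m$ so that $[D,S^m]a$ collapses to a bounded operator. Once this boundedness is in hand, everything else is routine scale-of-Hilbert-space bookkeeping.
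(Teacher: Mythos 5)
Your proposal is correct and follows essentially the same route as the paper: commutation of $(1+\T^2)^{\frac 12}$ and of $[D,(1+\T^2)^{\frac 12}]$ with $\widetilde\A$, the identity $[[D,(1+\T^2)^{\frac 12}],a]=[[D,a],(1+\T^2)^{\frac 12}]$ propagating commutation to the commutators, and the substitution $\psi=(1+\T^2)^{\frac n2}\phi$ in the operator-norm supremum (which you phrase as two inequalities rather than the paper's chain of equalities). The only divergence is your final paragraph on the boundedness of $(1+\T^2)^{\frac m2}[D,a]$, which the paper does not address and which is not actually needed for the stated norm identity---the supremum computation uses only the commutation of $[D,a]$ with $(1+\T^2)^{\frac 12}$ and holds as an equality of possibly infinite quantities; your functional-calculus sketch for $[D,(1+\T^2)^{\frac m2}]$ would need more care in a genuinely noncommutative setting, but this does not affect the correctness of the rest.
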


\begin{proof}
We know that $\prt{1+ \T^2}^{\frac{1}{2}}$ commutes with all elements in $\widetilde\A$. This implies that $\prt{1+ \T^2}^{-\frac{1}{2}} \in \widetilde\A$ commutes with all elements in $\widetilde\A$ since $\forall a \in \widetilde\A$, $[\prt{1+ \T^2}^{-\frac{1}{2}},a] \prt{1+ \T^2}^{\frac{1}{2}} = 0$ and the codomain of $\prt{1+ \T^2}^{\frac{1}{2}}$ is $\H$. The filtration property \vspace{-0.1cm}
\[a\in\widetilde\A_m, \ b\in\widetilde\A_n\ \implies \ ab\in\widetilde\A_{m+n}\ \text{ and }\ ba\in\widetilde\A_{m+n}\]
comes from the fact that $\bigcup_{n\in\setZ} \widetilde\A_n$ is generated by multiple applications of $\prt{1+ \T^2}^{\frac{1}{2}}$ and $\prt{1+ \T^2}^{-\frac{1}{2}}$ on $\widetilde\A_0= \widetilde\A$, and the whole construction is similar to the commutative case.

The norm \mbox{$\norm{\,\cdot\,}_m = \norm{(1+T^2)^{\frac m2}\ \cdot\,}$} on $ \widetilde\A_m$ is the operator norm. Indeed, if we consider $a\in\widetilde\A_m$ as an operator \mbox{$a : \H_n \rightarrow \H_{n+m}$} for some $n\in\setZ$, then:
\[
\norm{a}_{\text{op}} = \!\!\!\!\sup_{\phi\in\H_n,\;\phi\neq0}\!\!\!\! \frac{\scal{a\,\phi , a\,\phi }_{n+m}}{ \scal{\phi , \phi}_n} =  \!\!\!\!\sup_{\phi\in\H_n,\;\phi\neq0}\!\!\!\! \frac{\scal{a\,\phi\ ,\  \prt{1+ \T^2}\!^{n+m} a\,\phi }}{ \scal{\phi\ ,\  \prt{1+ \T^2}^{n} \phi}}
\]
\[
 = \!\!\!\!\sup_{\phi\in\H_n,\;\phi\neq0}\!\!\!\! \frac{\scal{\prt{1+ \T^2}\!\!^{\frac m2}a\,\prt{1+ \T^2}\!\!^{\frac n2}\phi\ ,\  \prt{1+ \T^2}\!\!^{\frac m2} a\,\prt{1+ \T^2}\!\!^{\frac n2}\phi }}{ \scal{\prt{1+ \T^2}^{\frac n2} \phi \ ,\  \prt{1+ \T^2}^{\frac n2} \phi}}
 \]
\[
 = \!\!\!\!\sup_{\psi\in\H,\;\psi\neq0}\!\!\!\! \frac{\scal{\prt{1+ \T^2}\!\!^{\frac m2}a\,\psi\ ,\  \prt{1+ \T^2}\!\!^{\frac m2} a\,\psi }}{ \scal{\psi \ ,\   \psi}}
\]
\[
 = \norm{\prt{1+ \T^2}\!\!^{\frac m2}a} = \norm{a}_m
 \]
where we use the facts that $\prt{1+ \T^2}\!^{\frac 12}$ and $a$ commute with each other and that $\prt{1+ \T^2}\!^{\frac 12}$ is self-adjoint. The result is clearly independent of the chosen $\H_n$.

The condition that \mbox{$[D,\prt{1+\T^2}^{\frac 12}]$} commutes with all elements in $\widetilde\A$ is equivalent to the fact that, for all $a\in\widetilde\A$, \mbox{$[D,a]$} commutes with $\prt{1+\T^2}^{\frac 12}$. Indeed, by using the fact that $\prt{1+\T^2}^{\frac 12}$ commutes with $a$:
\begin{eqnarray*}
 &&[D,\prt{1+\T^2}^{\frac 12}] \;a - a \;[D,\prt{1+\T^2}^{\frac 12}]\\
&=& D\prt{1+\T^2}^{\frac 12} a - \prt{1+\T^2}^{\frac 12} D a - a D \prt{1+\T^2}^{\frac 12} + a \prt{1+\T^2}^{\frac 12}D\\
&=& Da \prt{1+\T^2}^{\frac 12}  - \prt{1+\T^2}^{\frac 12} D a - a D \prt{1+\T^2}^{\frac 12} + \prt{1+\T^2}^{\frac 12} aD\\
&=& [D,a] \prt{1+\T^2}^{\frac 12} - \prt{1+\T^2}^{\frac 12} [D,a].
\end{eqnarray*}
Then, by a reasoning similar to above, the operator norm of $[D,a]$ is independent of the chosen $\H_n$.
\end{proof}

\section{The Moyal--Minkowski Spacetime}\label{NLMP}

Commutative temporal Lorentzian spectral triples can easily be constructed by using the construction presented in the previous section from any suitable globally hyperbolic Lorentzian manifold. One can wonder however if noncommutative examples can explicitly be constructed. In this section we answer the problem by showing that the Minkowski spacetime endowed with a Moyal product is a noncomutative temporal Lorentzian spectral triple.

The construction of spectral triples based on Euclidean Moyal planes has already been studied extensively in \cite{Gayral}. The algebra $\A$ is chosen as the space $\Sw(\setR^n)$ of Schwartz functions, which are smooth functions $f\in C_0^\infty (\setR^n)$ rapidly vanishing at infinity together with all derivatives. The product on $\Sw(\setR^n)$ is defined by the Moyal product:
\[(f \star h) (x) = \frac{1}{(2\pi)^n}\int \int f(x-\frac 12 \Theta u)\ h(x+v)\ e^{-iu \cdot v} \;d^nu\;d^nv\]
where $\Theta$ is a real skewsymmetric $n \times n$ constant matrix. This product can be extended to the space $L^2(\setR^n)$ \cite{Gayral,VG88} and there is a representation of $\A$ on the space of square integrable spinors $L^2(\setR^n) \otimes \setC^{2^{\lfloor{n/2}\rfloor}}$ as a multiple of the left regular action
\[\pi(f) \psi = (\L(f) \otimes 1) \psi = f \star \psi.\] 
$\A$ is a nonunital involutive Fréchet algebra, and if equipped with the operator norm
\[\norm{a}_{\text{op}} = \sup_{b\in L^2(\setR^n), b \neq 0} \frac{\norm{a\star b}_2}{\norm{b}_2},\]
 $\A$ becomes a nonunital pre-$C^*$-algebra. The pre-$C^*$-algebra $\A=\set{\Sw(\setR^n),\star}$, the Hilbert space $\H = L^2(\setR^n) \otimes \setC^{2^{\lfloor{n/2}\rfloor}}$ and the Dirac operator $D = -i \partial_\mu \otimes \gamma^\mu$ define a Euclidean noncompact spectral triple of spectral dimension $n$.

We construct the Moyal--Minkowski spacetime in a similar way. We work with the space $\setR^{1,n-1}$, and the Moyal $\star$ product is defined on the space of Schwartz functions $\Sw = \Sw(\setR^{1,n-1})$ and extended to $L^2(\setR^{1,n-1})$. We define the nonunital pre-$C^*$-algebra $\A=\prt{\Sw,\star}$ acting on the Hilbert space $\H = L^2(\setR^{1,n-1}) \otimes \setC^{2^{\lfloor{n/2}\rfloor}}$. We must choose a suitable unitization $\widetilde\A$ for the temporal Lorentzian spectral triple. A first choice could be the (left and right) Moyal multiplier algebra:
\[ M(\A) = \set{ T \in \Sw' : T \star h  \in \Sw\text{ and } h \star T  \in \Sw \ \text{ for all } h  \in \Sw}\]
where $\Sw' = \Sw'(\setR^{1,n-1})$ is the dual space of tempered distributions. Coordinates functions $x^0, x^1, \dots x^{n-1}$ belong to this space, and respect the following commutation relations with the star product:
\[ [x^\mu,x^\nu] = x^\mu \star x^\nu -  x^\nu \star x^\mu = i\, \Theta^{\mu\nu} 1.\] However, this unitization cannot be chosen since some elements in $M(\A)$ are unbounded. Instead we choose the sub-algebra $\widetilde\A = \prt{\B,\star} \subset M(\A)$ of smooth functions which are bounded together with all derivatives. $\widetilde\A = \prt{\B,\star}$ is a unital Fréchet pre-$C^*$-algebra that contains $\A$ as an ideal \cite{Gayral}. To complete the  temporal Lorentzian spectral triple, we choose the temporal element $\T = x^0$ as the first coordinate function and the usual flat Dirac operator $D = -i \partial_\mu \otimes \gamma^\mu$.

\begin{proposition}
The triple $(\A,\widetilde\A,\H,D,\T) =$
 \[\prt{\prt{\Sw(\setR^{1,n-1}),\star},\prt{\B(\setR^{1,n-1}),\star},L^2(\setR^{1,n-1}) \otimes \setC^{2^{\lfloor{n/2}\rfloor}},-i \partial_\mu \otimes \gamma^\mu, x^0}\]
respects the chosen axioms for a temporal Lorentzian spectral triple.
\end{proposition}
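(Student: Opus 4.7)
The plan is to verify each axiom of a temporal Lorentzian spectral triple for the given data in turn, leaning heavily on the Euclidean Moyal results of \cite{Gayral}. The pre-$C^*$-structures of $\A=(\Sw,\star)$ and $\widetilde\A=(\B,\star)$, the fact that $\A$ is an ideal of $\widetilde\A$, and that both act faithfully as bounded Moyal-multiplication operators on $\H$, transfer verbatim from the Euclidean setting, since the Moyal product depends only on the smooth structure of $\sR^n$ and not on a metric.

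I would next handle the time operator $\T=\L(x^0)\otimes 1$. The truncated Moyal expansion $x^0\star h=x^0 h+\tfrac{i}{2}\Theta^{0\nu}\partial_\nu h$, exact because second and higher derivatives of $x^0$ vanish, exhibits $\T$ as a first-order differential operator with real principal symbol, hence unbounded self-adjoint on $\H$ after a partial Fourier transform in the directions where $\Theta^{0\nu}\neq 0$. Antisymmetry of $\Theta$ forces $\Theta^{00}=0$, so $\T\star\T=(x^0)^2$ and by induction Moyal powers of $x^0$ coincide with ordinary powers. The $C^*$-functional calculus of $\T$ then commutes with the Moyal representation $\L$, yielding $(1+\T^2)^{-1/2}=\L\bigl((1+(x^0)^2)^{-1/2}\bigr)\in\widetilde\A$, since $(1+(x^0)^2)^{-1/2}$ belongs to $\B$.

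For the Dirac-operator conditions, the Moyal Leibniz rule $\partial_\mu(a\star\psi)=(\partial_\mu a)\star\psi+a\star(\partial_\mu\psi)$ gives immediately $[D,\pi(a)]=-i\,\L(\partial_\mu a)\otimes\gamma^\mu$, which is bounded because $\partial_\mu a\in\B$ whenever $a\in\B$. Specialising to $a=x^0$ yields $[D,\T]=-i\otimes\gamma^0$. In the chosen signature one has $(\gamma^0)^*=-\gamma^0$ and $(\gamma^0)^2=-1$, so $[D,\T]$ is self-adjoint, commutes with every Moyal multiplication (it acts only on the spinor factor as a constant matrix), and satisfies $[D,\T]^2=1\in\widetilde\A$. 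Writing out $[D,\T]D=\partial_0\otimes 1-\sum_j\partial_j\otimes\gamma^0\gamma^j$ and using the Hermiticity of $\gamma^0\gamma^j$ for $j\geq 1$ together with the skew-self-adjointness of each $\partial_\mu$, one verifies directly that this operator is skew-self-adjoint.

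The main and most substantial step is the compact resolvent condition. Expanding $\langle D\rangle^2=-\tfrac{1}{2}\bigl((DJ)^2+(JD)^2\bigr)$ with $J=[D,\T]=-i\gamma^0$, the cross terms $\partial_0\partial_j\otimes\gamma^0\gamma^j$ appear with opposite signs in $(DJ)^2$ and $(JD)^2$ and cancel in the symmetric sum, while the identity $\gamma^0\gamma^j\gamma^0\gamma^k=\gamma^j\gamma^k$ together with the Clifford relations reduces the spatial quadratic piece to a scalar. The outcome is the clean identity $\langle D\rangle^2=-(\partial_0^2+\nabla^2)\otimes 1$, that is, the positive Euclidean Laplacian $-\Delta_E$ on $\sR^n$, thereby realising at the operator level the "Wick rotation" that motivates the temporal construction. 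Compactness of $\pi(a)(1-\Delta_E)^{-1/2}$ for $a\in\Sw$ is then exactly the noncompact Moyal compact-resolvent statement proved in \cite{Gayral}, where in fact this operator is shown to lie in a Schatten ideal. This last invocation is the main obstacle: every other axiom reduces to a truncated Moyal expansion or a Clifford-algebra identity, whereas the compactness genuinely depends on the Cwikel-type estimates for Moyal multipliers composed with Bessel potentials established there.
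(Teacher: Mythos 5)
Your verification follows the same axiom-by-axiom route as the paper's own proof: the algebraic and representation-theoretic items are imported from the Euclidean Moyal analysis of \cite{Gayral}, and the Krein-type conditions on $[D,\T]=-i\gamma^0$ (Hermiticity, commutation with $\widetilde\A$, $[D,\T]^2=1$, skew-self-adjointness of $[D,\T]D$) reduce to the same flat gamma-matrix identities the paper uses. The one substantive difference is in your favour: the paper's proof never addresses the final axiom, the compactness of $a\prt{1+\scal{D}^2}^{-\frac12}$ for $a\in\A$, whereas you compute $\scal{D}^2=-(\partial_0^2+\nabla^2)\otimes 1$ --- and indeed the cross terms $\partial_0\partial_j$ carry opposite signs in $(DJ)^2$ and $(JD)^2$ and cancel, while $\gamma^0\gamma^j\gamma^0\gamma^k=\gamma^j\gamma^k$ and the Clifford relations produce the spatial Laplacian --- and then invoke the Cwikel/Schatten-class estimates of \cite{Gayral} for $\L(a)\prt{1-\Delta_E}^{-\frac12}$. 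That closes a step the paper leaves implicit, and your observation that this identity is the operator-level Wick rotation is apt. Likewise, your care with $\T$ --- noting $\Theta^{00}=0$ so that $x^0\star x^0=(x^0)^2$ and hence $\prt{1+\T^2}^{-\frac12}$ really is left Moyal multiplication by the pointwise function $\prt{1+(x^0)^2}^{-\frac12}\in\B$ --- justifies a claim the paper simply asserts.
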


\begin{proof}
\begin{itemize}\itemsep=6pt
\item $\mathcal{H} = L^2(\setR^{1,n-1}) \otimes \setC^{2^{\lfloor{n/2}\rfloor}}$ is a Hilbert space  with positive definite inner product defined by the usual inner product on spinors $\scal{\psi,\phi} = \int \psi^* \phi\, d\mu_g$.

\item $\A = \prt{\Sw(\setR^{1,n-1}),\star} $ is a  nonunital pre-$C^*$-algebra with a representation $\pi(f) = \L(f) \otimes 1$ as bounded multiplicative operators on $\mathcal{H}$  with the norm \mbox{$\norm{\,\cdot\,} = \norm{\,\cdot\,}_{\text{op}}$}. 

\item $\widetilde\A = \prt{\B(\setR^{1,n-1}),\star}$ is a preferred unitization of $\mathcal{A}$ with the same representation $\pi$ on $\mathcal{H}$ and with the same operator norm \mbox{$\norm{\,\cdot\,}$}.

\item $\T = x^0$ acts as an unbounded self-adjoint operator on $\H$. $\prt{1+ \T^2}^{-\frac{1}{2}} = \frac{1}{\sqrt{1+(x^0)^2}}$ is bounded on all $\setR^{1,n-1}$ with bounded derivatives, so it belongs to $\widetilde\A$.

\item $D=-i \partial_\mu \otimes \gamma^\mu$ is clearly an unbounded operator densely defined on $\mathcal{H}$. The property that $[D,a]$ acts as a multiplicative operator $D(a) = -ic(da)$ is still valid in the nonunital case \cite{Gayral}, so $[D,a]$ is bounded for every $a\in\widetilde\A$. 

\item From $[D,x^0] = -i c(dx^0) = -i \gamma^0$, the Krein space conditions can easily be checked:
\begin{itemize}
\item (Hermicity)\quad $[D,x^0]^* = \prt{-i \gamma^0}^* = i \prt{\gamma^0}^* = -i \gamma^0 = [D,x^0]$,
\item (Commutativity)\quad $[D,x^0] = -i \gamma^0$ commutes with every element in $\widetilde\A$ since $-i\in Z(\widetilde\A)$,
\item (symmetry)\quad $[D,x^0]^2 = -g^{00} = 1$,
\item (skew-self-adjointness)\quad $\prt{[D,x^0]D}^* = -[D,x^0]D$
\[\lequi -\prt{\gamma^\mu}^* \gamma^0 = \gamma^0 \gamma^\mu \quad \text{for }\mu \geq 0\]
\[ \lequi  \gamma^0 \gamma^0 = \gamma^0 \gamma^0 \quad \text{and } \gamma^\mu \gamma^0 = -\gamma^0 \gamma^\mu \quad\text{for } \mu > 0.\]
\end{itemize}
\end{itemize}
\end{proof}

If the dimension $n$ is even, then this temporal Lorentzian spectral triple is even by taking the chirality element \mbox{$\gamma = (-i)^{\frac{n}{2}+1}\gamma^0\dots\gamma^{n-1}$}.

\begin{proposition}
The temporal Lorentzian spectral triple on the Moyal--Minkowski spacetime has commutative time if and only if the matrix defining the Moyal product is degenerate with
\[\Theta^{0\mu} = \Theta^{\mu 0} = 0 \qquad \text{for all }\ \mu=0,\dots,n-1.\]
\end{proposition}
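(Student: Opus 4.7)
The plan is to reduce the commutative-time axiom to a statement about Moyal commutators of the form $[f(x^0),g]_\star$, with $f$ equal to $\sqrt{1+(x^0)^2}$ or $x^0/\sqrt{1+(x^0)^2}$ and $g$ ranging over $\widetilde\A=\B$, and then to test that condition on plane-wave elements. The reduction rests on the identification $(1+\T^2)^{\frac 12} = \L\bigl(\sqrt{1+(x^0)^2}\bigr)$: since $\Theta^{00}=0$ by antisymmetry one has $\L(x^0)^k = \L((x^0)^k)$, and a spatial Fourier transform diagonalises $\L(x^0)=x^0+\tfrac{i}{2}\Theta^{0\nu}\partial_\nu$ as the position operator $x^0$ shifted by $-\tfrac12\Theta^{0\nu}k_\nu$ on each momentum fiber, so that $h(\L(x^0))=\L(h(x^0))$ for every Borel function $h$. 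Combined with the derivation identity $[\partial_\mu,\L(h)]=\L(\partial_\mu h)$, this yields $[D,(1+\T^2)^{\frac 12}] = -i\gamma^0 \otimes \L\bigl(\tfrac{x^0}{\sqrt{1+(x^0)^2}}\bigr)$; and since $\gamma^0$ commutes with every $\pi(g)$, the two commutativity conditions in the definition reduce precisely to the vanishing of the two Moyal commutators above.

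For the $(\Leftarrow)$ direction, assume $\Theta^{0\mu}=0$ for all $\mu$. In the oscillatory-integral formula $(f(x^0)\star g)(x) = (2\pi)^{-n}\!\int\!\!\int f\bigl(x^0 - \tfrac12\Theta^{0\nu}u_\nu\bigr) g(x+v)\,e^{-iu\cdot v}\,du\,dv$ the factor $f$ becomes independent of $u$, pulls out of the integral, and leaves the pointwise product $f(x^0)\,g(x)$; a symmetric computation gives $g\star f(x^0)=f(x^0)\,g$. Both Moyal commutators therefore vanish and the triple has commutative time.

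For the $(\Rightarrow)$ direction, test the hypothesis against the plane waves $g_k(x)=e^{ik\cdot x}$, which belong to $\widetilde\A=\B$ since they are smooth and bounded together with all derivatives. A direct evaluation of the oscillatory integral yields
\[(f(x^0)\star g_k)(x) = f\bigl(x^0 - \tfrac12\Theta^{0\nu}k_\nu\bigr)\,e^{ik\cdot x},\qquad (g_k\star f(x^0))(x) = f\bigl(x^0 + \tfrac12\Theta^{0\nu}k_\nu\bigr)\,e^{ik\cdot x}.\]
Setting $f(x^0)=\sqrt{1+(x^0)^2}$, the commutative-time hypothesis forces $f(x^0-\alpha)=f(x^0+\alpha)$ for every $x^0\in\setR$, where $\alpha=\tfrac12\Theta^{0\nu}k_\nu$; evaluating at $x^0=\alpha$ alone gives $\sqrt{1+4\alpha^2}=1$, hence $\alpha=0$. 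Letting $k$ range over $\setR^{1,n-1}$ yields $\Theta^{0\mu}=0$ for every $\mu$, and antisymmetry of $\Theta$ delivers $\Theta^{\mu 0}=0$ as well.

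The main obstacle I anticipate is the preliminary spectral identification of $(1+\T^2)^{\frac 12}$ with the Moyal multiplication $\L\bigl(\sqrt{1+(x^0)^2}\bigr)$, because $\T=\L(x^0)$ genuinely mixes position and derivative when some $\Theta^{0\mu}\neq 0$. The cleanest route is the spatial Fourier diagonalisation sketched above, which reduces $\L(x^0)$ on each spatial-momentum fiber to an ordinary position operator shifted by a real $k$-dependent constant, so that Borel functional calculus reproduces exactly the Moyal multiplication by $h(x^0)$ and the subsequent plane-wave tests become unambiguous.
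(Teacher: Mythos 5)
Your proof is correct, and it reaches the statement by a more computational route than the paper's two-sentence argument. The paper asserts that degeneracy of $\Theta$ is equivalent to $x^0$ being central in $M(\A)$ (which follows from $x^0\star f-f\star x^0=i\Theta^{0\nu}\partial_\nu f$), deduces ``by expansion'' that $\sqrt{1+(x^0)^2}$ is then central, and handles the condition on $[D,\sqrt{1+(x^0)^2}]$ via the commutator identity established in the earlier general proposition, using that $[D,a]$ acts multiplicatively. What the paper leaves implicit are precisely the two points you single out: first, the identification of the abstract operator $(1+\T^2)^{\frac 12}$ built from $\T=\L(x^0)$ with Moyal multiplication by $\sqrt{1+(x^0)^2}$, which you justify by diagonalising $\L(x^0)=x^0+\tfrac{i}{2}\Theta^{0\nu}\partial_\nu$ on spatial-momentum fibers; and second, the converse implication actually needed for the ``only if'' direction, namely that centrality of $\sqrt{1+(x^0)^2}$ (rather than of $x^0$ itself) already forces $\Theta^{0\nu}=0$. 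Your plane-wave test settles the latter cleanly: the identity $f(x^0-\alpha)=f(x^0+\alpha)$ with $f=\sqrt{1+(x^0)^2}$ and $\alpha=\tfrac12\Theta^{0\nu}k_\nu$, evaluated at $x^0=\alpha$, gives $\sqrt{1+4\alpha^2}=1$, hence $\alpha=0$ for every $k$. Both oscillatory-integral evaluations check out against the paper's form of the star product, and the plane waves do lie in $\B$, so testing against them is legitimate since the commutative-time axiom quantifies over all of $\widetilde\A$. The cost of your approach is length; the gain is that the two steps the paper glosses over are actually proven.
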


\begin{proof}
The degeneration of the matrix is equivalent to the fact that $\T=x^0$ is in the center of $M(\A)$, which implies by expansion that $\sqrt{1+(x^0)^2}$ commutes with all elements in $\widetilde\A\subset M(\A)$. Moreover, the property that \mbox{$[D, \sqrt{1+(x^0)^2}]$} commutes with all elements $a\in \widetilde\A$ is equivalent to the fact that $[D,a]$ commutes with $ \sqrt{1+(x^0)^2}$ which is true since $[D,a]$ acts as a simple multiplicative operator.
\end{proof}

We can notice that, when the Moyal--Minkowski spacetime has commutative time, this allows us to construct temporal Lorentzian spectral triples corresponding to a modification of the metric by a scale factor $u(t)$ with $[D,x^0]^2=u(t)$, since $u(t)$ belongs to the center of the algebra.

\section{A Lorentzian Distance Formula for even Temporal Lorentzian Spectral Triples with commutative time}\label{distsec}

We have proposed a definition of temporal Lorentzian spectral triples based on a time element $\T$ and shown that such a time element, under some commutative condition, allows us to extend the algebra to unbounded elements. In this last section, we show that such an extension can be used to define a Lorentzian distance formula for noncommutative spacetimes.

We first start by reviewing the results on the Lorentzian distance formula in the commutative case. We recall that a Lorentzian distance on a Lorentzian manifold $\M$ is a function $d : \M \times \M \rightarrow [0,+\infty) \cup \set{+\infty}$ defined by \vspace{0.2cm}
\[ \vspace{0.2cm}
d(p,q) =
\begin{cases}
\quad \sup \left\{ l(\gamma) : 
\begin{array}{c}
\gamma \text{ future directed causal}\\
\text{piecewise smooth curve}\\
\text{with } \gamma(0)=p,\ \gamma(1)=q
\end{array}\right\} \ &\text{if } p \preceq q\\ 
\quad 0 &\text{if } p \npreceq q
\end{cases}
 \]
 where $l(\gamma) = \int \sqrt{ -g_{\gamma(t)}( \dot\gamma(t) ,  \dot\gamma(t) )}\ dt$ is the length of the curve and $\preceq$ denotes the causal relation between points.
 
If $\M$ is globally hyperbolic, this function is finite and continuous, and respects the following properties \cite{Beem}:
\begin{enumerate}
\item $d(p,p) = 0$,
\item $d(p,q) \geq 0$ for all $p,q\in\M$,
\item If $d(p,q) > 0$, then $d(q,p) = 0$,
\item If $d(p,q) > 0$ and $d(q,r) > 0$, then $d(p,r) \geq d(p,q) + d(q,r)$.
\end{enumerate}

In the same way as Connes' distance formula gives an algebraic formulation of the Riemannian distance, there exists an algebraic formulation for the Lorentzian distance following the results in \cite{F6,F3}:

\begin{definition}\label{defnewdist}[\cite{F6}]
Let us consider a globally hyperbolic Lorentzian spin manifold $\M$ with even dimension. For every two points $p,q\in\M$, we define:
\[
\widetilde d(p,q) := \inf_{f \in C^\infty(\M,\setR)}\set{ \max\set{0,f(q)-f(p)} :
\forall \phi \in \H, \scal{\phi,\J([D,f]+ i\gamma) \phi } \leq 0} 
\]
where $D$ is the Dirac operator, $\H$ is the Hilbert space of square integrable spinor sections over $\M$, $\gamma$ is the chirality element and $\J=i\gamma^0$ (curved gamma matrix).
\end{definition}

\begin{proposition}[\cite{F6}]
The function $\widetilde d(p,q)$ respects all the properties of a Lorentzian distance and, for every $p,q\in \M$, we have $d(p,q) \leq \widetilde d(p,q)$.
\end{proposition}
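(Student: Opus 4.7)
The plan has three main steps. The key technical reduction is to translate the operator constraint $\scal{\phi,\J([D,f]+i\gamma)\phi}\leq 0$ into a pointwise geometric condition on $\nabla f$. Working in coordinates with $x^0=\T$ (available from the global splitting $g=-u\,d\T^2+g_\T$), the operators $\J=i\gamma^0$, $[D,f]=-i\gamma^\mu\partial_\mu f$ and $\gamma$ all act as multiplicative operators on spinor fibres, so the constraint is equivalent to the pointwise matrix inequality
\[
N(x) \;:=\; \gamma^0\gamma^\mu\partial_\mu f(x) - \gamma^0\gamma \;\leq\;0.
\]
The matrices $A_i:=\gamma^0\gamma^i$ and $B:=\gamma^0\gamma$ are Hermitian, pairwise anticommute, and together generate a positive-definite Clifford algebra, so the spatial part $v_iA_i-B$ with $v_i=\partial_i f$ has eigenvalues $\pm\sqrt{u(g^{ij}v_iv_j+1)}$. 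Combining with the scalar contribution $g^{00}\partial_0 f$ shows that $N\leq 0$ is equivalent to the geometric condition that $\nabla f$ be past-directed timelike with $g(\nabla f,\nabla f)\leq -1$. I call such $f$ \emph{admissible}; this set is non-empty, since $f=C\T$ is admissible for $C$ large enough. The hard part is precisely this spectral analysis: the chirality $\gamma$ is essential, because without it the shifted square root producing the $+1$ term would disappear and the Lorentzian unit-norm constraint could not be encoded as a single pointwise operator inequality.

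\textbf{The four distance axioms.} All four follow directly from the characterization of admissibility. (1) $\widetilde d(p,p)=0$, since $\max\{0,f(p)-f(p)\}=0$ for every admissible $f$. (2) $\widetilde d(p,q)\geq 0$ by construction of the $\max$. (3) If $\widetilde d(p,q)>0$, then every admissible $f$ satisfies $f(q)-f(p)\geq\widetilde d(p,q)>0$, so $\max\{0,f(p)-f(q)\}=0$ and $\widetilde d(q,p)=0$. (4) If $\widetilde d(p,q),\widetilde d(q,r)>0$, then every admissible $f$ satisfies $f(r)-f(p)=[f(r)-f(q)]+[f(q)-f(p)]\geq\widetilde d(q,r)+\widetilde d(p,q)$, and taking the infimum over admissible $f$ gives the reverse triangle inequality.

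\textbf{The inequality $d\leq\widetilde d$.} If $p\npreceq q$, then $d(p,q)=0\leq\widetilde d(p,q)$ trivially. Otherwise, let $\gamma:[0,1]\to\M$ be a future-directed causal piecewise smooth curve from $p$ to $q$ and let $f$ be admissible. The reverse Cauchy--Schwarz inequality, applied to the past-directed timelike $\nabla f$ and the future-directed causal $\dot\gamma(t)$, gives
\[
\tfrac{d}{dt}f(\gamma(t)) \;=\; g(\nabla f,\dot\gamma) \;\geq\; \sqrt{-g(\nabla f,\nabla f)}\,\sqrt{-g(\dot\gamma,\dot\gamma)}\;\geq\;\sqrt{-g(\dot\gamma,\dot\gamma)}.
\]
Integrating yields $f(q)-f(p)\geq l(\gamma)$; taking the supremum over $\gamma$ gives $f(q)-f(p)\geq d(p,q)$, and then the infimum over admissible $f$ gives $\widetilde d(p,q)\geq d(p,q)$.
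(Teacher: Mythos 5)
Your proof is correct and follows the same route the paper relies on: the proposition is imported from \cite{F6} without proof here, but your key reduction --- that $\scal{\phi,\J([D,f]+i\gamma)\phi}\leq 0$ for all $\phi$ is a pointwise Hermitian-matrix inequality equivalent to $\nabla f$ being past-directed with $g(\nabla f,\nabla f)\leq -1$ --- is exactly the equivalence the paper itself quotes from \cite{F6} in Section~\ref{distsec}, and the remaining steps (the four axioms from strict monotonicity of admissible $f$ between causally related points, and $d\leq\widetilde d$ via the reverse Cauchy--Schwarz inequality integrated along causal curves) are the standard ones. The only minor caveat is that $f=C\T$ need not be admissible for a fixed constant $C$ when the lapse $u$ is unbounded, but non-emptiness of the admissible set still holds because every globally hyperbolic manifold admits a steep temporal function.
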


It is still unknown if there is an equality between this algebraic formula and the usual Lorentzian distance formula for every globally hyperbolic Lorentzian spin manifold. However, as it is shown in \cite{F6}, the equality can be obtained for manifolds whose Lorentzian distance function can be approximate by suitable smooth functions. In particular, on the Minkowski spacetime, this formula corresponds exactly to the usual Lorentzian distance. The proof of the equality for more general manifolds is currently a work in progress. We can conjecture that this is valid for a large class of globally hyperbolic Lorentzian spin manifolds.

We want to show that it is possible to adapt this algebraic formula to commutative temporal Lorentzian spectral triples with a possible generalization to noncommutative spaces. The difficulty comes from the constraint
\[
\forall \phi \in \H, \scal{\phi,\J([D,f]+ i\gamma) \phi } \leq 0
\]
which is equivalent, in the commutative case, to \cite{F6}
\[
\sup g( \nabla f, \nabla f ) \leq -1  \quad\text{and}\quad \nabla f \text{ is past-directed}
\]
which implies that the needed functions are unbounded along timelike curves, which is not possible in $\widetilde\A$. However, some functions of the filtered algebra \mbox{$\bigcup_{n\in\setZ} \widetilde\A_n$} fit our goal, and we can propose the following definition:

\begin{definition}\label{defnewdisttlost}
Let us consider a commutative even temporal Lorentzian spectral triple $(\A,\widetilde\A,\H,D,\T)$ with $\setZ_2$-grading $\gamma$ constructed from a globally hyperbolic Lorentzian spin manifold $\M$. For every two points $p,q\in\M$, we define:
\[
\widetilde d(p,q) := \inf_{f \in \bigcup \widetilde\A_n} \set{ \max\set{0,f(q)-f(p)} :
\forall \phi \in \H, \scal{\phi,[D,\T]([D,f]+ i\gamma) \phi } \geq 0} 
\]
where $\bigcup \widetilde\A_n$ is the filtered algebra generated by the time element $\T$.
\end{definition}

This definition is equivalent to the definition \ref{defnewdist}. Indeed, since the Lorentzian distance between two points is a local quantity, there is no influence on imposing conditions on the behaviour of the test functions at infinity (as long as their respect the condition $\sup g( \nabla f, \nabla f ) \leq -1$), so smooth functions with polynomial growth bounded on Cauchy surfaces are sufficient and lead to the same formula.

In order to generalize this definition to noncommutative spaces, we need to introduce a distance between states instead of a distance between points. We recall that states on $\A$ are positive linear functionals (automatically continuous) of norm one. The space of states is denoted by $S(\A)$. It is a closed convex set (for the weak-$^*$ topology), and extremal points are called pure states, with the set of pure states denoted by $P(\A)$. In the commutative case, those pure states are the characters of the algebra (non-zero *-homomorphisms). From Gel'fand--Naimark theory (see e.g. \cite{Var}), the pure states on $\A=C^\infty_0(\M)$ correspond to the points of the manifold $\M$ by the relation $\chi \sim p$ if and only if, $\forall f\in\A$, $\chi(f)=f(p)$ , for $\chi\in P(\A)$ and $p\in\M$. Those states can be extended uniquely to $\bigcup \widetilde\A_n \subset C^\infty(\M)$ by requiring that $\forall f\in\bigcup \widetilde\A_n$, $\chi(f)=f(p)$, and we have the following definition:

\begin{definition}\label{defnewdisttloststates}
Let us consider a commutative even temporal Lorentzian spectral triple $(\A,\widetilde\A,\H,D,\T)$ with $\setZ_2$-grading $\gamma$. For every two pure states $\chi,\xi\in P(\A)$, we define:
\[
\widetilde d(\chi,\xi) := \inf_{f \in \bigcup \widetilde\A_n} \set{ \max\set{0,\xi(f)-\chi(f)} :
\forall \phi \in \H, \scal{\phi,[D,\T]([D,f]+ i\gamma) \phi } \geq 0} 
\]
where $\bigcup \widetilde\A_n$ is the filtered algebra generated by the time element $\T$ and where $\chi$ and $\xi$ are considered as their unique extension to $\bigcup \widetilde\A_n$.
\end{definition}

This definition cannot be generalized to noncommutative spaces in a straight way. We need to study the way to extend the notion of states in the noncommutative case without using the notion of points, since such a notion becomes meaningless. First, we need to notice that, when the algebra is noncommutative, the space $P(\A)$ and $P(\widetilde\A)$ can be strongly different since the states on $\A$ may not necessarily be extended in a unique way to $\widetilde\A$. This will impose on us to only define a Lorentzian distance on $P(\widetilde\A)$. In order to extend the definition of the states to $\bigcup \widetilde\A_n$, we need the following lemma:

\begin{lemma}\label{lemmasubchar}
Let $\widetilde\A$ be a unital noncommutative pre-$C^*$-algebra and let $\chi \in P(\widetilde\A)$ be a pure state. If $a\in Z(\widetilde\A)$, then $\forall b \in \widetilde\A$ we have $\chi(ab)=\chi(a)\chi(b)$.
\end{lemma}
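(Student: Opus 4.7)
My plan is to use the GNS construction for the pure state $\chi$ and exploit the characterization of purity as irreducibility of the GNS representation. In that representation, the central element $a$ must act as a scalar multiple of the identity by Schur's lemma, which forces $\chi$ to be multiplicative whenever one of its arguments is central.

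More precisely, I would first pass to the $C^*$-completion $\overline{\widetilde\A}$ of the pre-$C^*$-algebra; a pure state on $\widetilde\A$ extends uniquely to a pure state on $\overline{\widetilde\A}$ by continuity, and any element of $Z(\widetilde\A)$ is still central in $\overline{\widetilde\A}$ because commutation relations survive norm-limits. Let $(\pi_\chi, \H_\chi, \xi_\chi)$ be the associated GNS triple, so that $\chi(b) = \langle \pi_\chi(b)\xi_\chi, \xi_\chi\rangle$ with $\xi_\chi$ a cyclic unit vector. Since $\chi$ is pure, $\pi_\chi$ is topologically irreducible, and Schur's lemma gives $\pi_\chi(\overline{\widetilde\A})' = \mathbb{C}\cdot I_{\H_\chi}$. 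Because $a$ is central, $\pi_\chi(a)$ commutes with the entire image of $\pi_\chi$ and therefore belongs to this commutant, so $\pi_\chi(a) = \lambda\, I_{\H_\chi}$ for some scalar $\lambda$. Pairing with $\xi_\chi$ identifies $\lambda = \chi(a)$, and then
\[\chi(ab) = \langle \pi_\chi(a)\pi_\chi(b)\xi_\chi, \xi_\chi\rangle = \lambda\,\langle \pi_\chi(b)\xi_\chi, \xi_\chi\rangle = \chi(a)\,\chi(b)\]
for every $b \in \widetilde\A$, which is exactly the claim.

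The delicate points are the two continuity statements needed for the pre-$C^*$-algebra setting — that the extension of $\chi$ to $\overline{\widetilde\A}$ remains pure, and that centrality passes to the completion — but both are standard. Beyond these, the proof is a textbook application of Schur's lemma, and I do not anticipate any real obstacle. An alternative route would avoid GNS by showing that for $a \geq 0$ central the functional $b \mapsto \chi(ab)$ is dominated by $\|a\|\chi(b)$ and then using the fact that $\chi$ is an extreme point of the positive majorants of itself, but the GNS approach is conceptually cleaner and generalizes better.
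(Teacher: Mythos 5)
Your proof is correct, but it takes a genuinely different route from the paper's. You pass to the $C^*$-completion, invoke the GNS construction, use the equivalence between purity and irreducibility, and apply Schur's lemma to conclude that the central element $\pi_\chi(a)$ acts as the scalar $\chi(a)$; multiplicativity then falls out immediately and uniformly for all $b\in\widetilde\A$. The paper instead avoids representation theory entirely: for $b$ normal it forms the \emph{commutative} unital pre-$C^*$-subalgebra $A$ generated by $\set{a,a^*,b,b^*,1}$ (commutative precisely because $a$ is central and $b$ is normal), shows via a Hahn--Banach extension argument that the restriction of $\chi$ to $A$ is still a pure state, and then uses the fact that pure states on commutative algebras are characters, hence multiplicative; general $b$ is handled afterwards by splitting into Hermitian and anti-Hermitian parts and using linearity. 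Your approach is the standard textbook one and is arguably cleaner --- it needs no normality detour and no separate verification that restrictions of pure states stay pure --- at the cost of importing the GNS/Schur machinery and the (routine but necessary) checks that the pure state and the centrality of $a$ survive passage to the completion, both of which you correctly flag and which do hold. The paper's argument stays at the level of states and commutative Gel'fand theory, which fits its overall elementary tone, but it has its own delicate point in the purity-of-restriction step. Either proof establishes the lemma.
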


\begin{proof}
At first, let us suppose that $b\in \widetilde\A$ is normal, i.e. $bb^*=b^*b$. We can define the unital pre-$C^*$-algebra $A \subset \widetilde\A$ generated by the elements $\set{a,a^*,b,b^*,1}$. Since $a$ and $a^*$ commute with everything and $b$ is normal, $A$ is a commutative pre-$C^*$-algebra. We have that $\chi\in S(A)$, since $\chi$ is a positive linear functional on $A$ such that $\chi(1)=1$. We can see that $\chi$ is a pure state on $A$. Indeed, if we suppose that $\chi = \lambda \chi_1 + (1-\lambda) \chi_2$ for two states $\chi_1,\chi_2 \in S(A)$ with $0<\lambda<1$, then $\chi_1$ can be extended to $\widetilde\A$ as a state with $\abs{\lambda\chi_1} \leq \abs{\chi}$ by the Hahn--Banach theorem \cite{BraRo} as well as $\chi_2$ by setting $\chi_2 = \frac{1}{1-\lambda} \prt{\chi - \lambda \chi_1}$, which implies that $\chi_1=\chi_2$ since $\chi$ is a pure states on $\widetilde\A$. Since $A$ is commutative and $\chi$ is a pure state, it is a character and respects the *-homomorphism property on $A$, so $\chi(ab)=\chi(a)\chi(b)$.

Let us suppose now that $b$ is not normal. Then it can be written as $b=b_1+b_2$ where $b_1$ is Hermitian and $b_2$ anti-Hermitian, so $b_1$ and $b_2$ are normal. Then we have 
\[\chi(ab)=\chi(ab_1) + \chi(ab_2)= \chi(a)\chi(b_1) + \chi(a)\chi(b_2) =\chi(a)\chi(b).\]
\end{proof}

\begin{proposition}
Let us consider a temporal Lorentzian spectral triple with commutative time $(\A,\widetilde\A,\H,D,\T)$ . Then for every pure state $\chi\in P(\widetilde\A)$ such that $\chi(\prt{1+ \T^2}^{-\frac{1}{2}}) \neq 0$, $\chi$ admits a unique extension to the filtered algebra $\bigcup_{n\in\setZ} \widetilde\A_n$ defined by:
\[
\forall a \in \bigcup_{n\in\setZ} \widetilde\A_n,\quad \chi(a) =  \chi(\prt{1+ \T^2}^{-\frac{1}{2}}) ^{-n} \;\chi(a_0)
\]
\[
\text{ if }\quad a=\prt{1+ \T^2}^{\frac{n}{2}} a_0 \quad\text{ with }\quad a_0 \in\widetilde\A. 
\]
\end{proposition}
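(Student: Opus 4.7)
The plan is to exploit the fact that commutative time places $\prt{1+\T^2}^{-\frac{1}{2}}$ in the centre $Z(\widetilde\A)$---a point already established in the proof of the preceding proposition via the argument that $\prt{1+\T^2}^{\frac{1}{2}}$ has full codomain on $\H$---and then to apply Lemma \ref{lemmasubchar} to factorise $\chi$ on products involving this central element. The extension formula is then forced, up to the nonvanishing assumption, by the identity $\prt{1+\T^2}^{\frac{1}{2}}\prt{1+\T^2}^{-\frac{1}{2}}=1$. As a first step I would iterate Lemma \ref{lemmasubchar} to obtain, for every $k\in\setN$ and every $b\in\widetilde\A$,
\[\chi\!\prt{\prt{1+\T^2}^{-\frac{k}{2}} b} \eq \chi\!\prt{\prt{1+\T^2}^{-\frac{1}{2}}}^{k}\chi(b).\]

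Next I would verify that the candidate value of $\chi(a)$ is independent of the representation $a=\prt{1+\T^2}^{\frac{n}{2}}a_0$, which is not canonical because of the descending filtration. Given two representations with exponents $n\geq m$ and $a_0,a_0'\in\widetilde\A$, multiplication by $\prt{1+\T^2}^{-\frac{n}{2}}$ gives $a_0=\prt{1+\T^2}^{-\frac{n-m}{2}}a_0'$, and the previous display yields $\chi(a_0)=\chi\!\prt{\prt{1+\T^2}^{-\frac{1}{2}}}^{n-m}\chi(a_0')$. Substituting produces
\[\chi\!\prt{\prt{1+\T^2}^{-\frac{1}{2}}}^{-n}\chi(a_0) \eq \chi\!\prt{\prt{1+\T^2}^{-\frac{1}{2}}}^{-m}\chi(a_0'),\]
so the value is unambiguous. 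The nonvanishing hypothesis $\chi\!\prt{\prt{1+\T^2}^{-\frac{1}{2}}}\neq 0$ is precisely what is needed to invert this scalar when $n>0$; for $n=0$ the formula reduces to the original $\chi(a)$, showing that this is indeed an extension.

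For uniqueness, the crucial observation is that $\prt{1+\T^2}^{\frac{1}{2}}$ lies in the centre of $\bigcup_{n\in\setZ}\widetilde\A_n$ and is the two-sided inverse of the central element $\prt{1+\T^2}^{-\frac{1}{2}}$ of $\widetilde\A$. Any linear extension $\tilde\chi$ of $\chi$ that continues to satisfy the central factorisation property of Lemma \ref{lemmasubchar} must then obey $1=\tilde\chi(1)=\tilde\chi\!\prt{\prt{1+\T^2}^{\frac{1}{2}}}\chi\!\prt{\prt{1+\T^2}^{-\frac{1}{2}}}$, which pins down $\tilde\chi\!\prt{\prt{1+\T^2}^{\frac{1}{2}}}$ and hence $\tilde\chi(a)$ for every $a=\prt{1+\T^2}^{\frac{n}{2}}a_0$. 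The main obstacle I anticipate is conceptual rather than computational: $\bigcup_{n\in\setZ}\widetilde\A_n$ is only a partial $^*$-algebra, so the usual $C^*$-algebra notion of state is not directly available, and one must specify in which class of functionals uniqueness is meant. The cleanest framing is uniqueness among linear extensions preserving the central-multiplicativity inherited from Lemma \ref{lemmasubchar}; under this natural condition the argument reduces to the one-line identity displayed above.
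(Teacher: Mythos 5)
Your argument is correct and follows essentially the same route as the paper: the same reduction of well-definedness to Lemma \ref{lemmasubchar} applied to the central element $\prt{1+\T^2}^{-\frac{1}{2}}$, with the identical cancellation computation on two representations $a=\prt{1+\T^2}^{\frac{n}{2}}a_0=\prt{1+\T^2}^{\frac{m}{2}}a_0'$. The only differences are elaborations --- you make the iterated multiplicativity $\chi\prt{\prt{1+\T^2}^{-\frac{k}{2}}b}=\chi\prt{\prt{1+\T^2}^{-\frac{1}{2}}}^{k}\chi(b)$ explicit and add a uniqueness discussion that the paper omits entirely (its proof only establishes linearity, positivity and independence of the chosen $a_0$) --- and your remark that uniqueness must be phrased relative to a class of extensions preserving central multiplicativity is a legitimate point the paper does not address.
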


\begin{proof}
Let us recall that, for every $a\in \bigcup_{n\in\setZ} \widetilde\A_n$, there exist $a_0\in\widetilde\A$ and $n$ such that $a=\prt{1+ \T^2}^{\frac{n}{2}} a_0$. Moreover, since $\prt{1+ \T^2}^{-\frac 12} \in \widetilde\A$, for every $a,b\in \bigcup_{n\in\setZ}$, there exist $a_0,b_0\in\widetilde\A$ and a common $n$ such that $a=\prt{1+ \T^2}^{\frac{n}{2}} a_0$ and $b=\prt{1+ \T^2}^{\frac{n}{2}} b_0$. It follows that the extension of $\chi$ to $\bigcup_{n\in\setZ} \widetilde\A_n$ is linear. We have trivially that $\chi$ is positive with $\chi(1)=1$.

In order to show that this extension if well defined, we must prove that the definition of $\chi(a)$ is independent of the chosen $a_0$. Let us suppose that $a=\prt{1+ \T^2}^{\frac{n}{2}} a_0 =\prt{1+ \T^2}^{\frac{m}{2}}  a_0^\prime$ with $a_0,a_0^\prime\in\widetilde\A$. We have $a_0 =\prt{1+ \T^2}^{\frac{m-n}{2}}  a_0^\prime$. Since the temporal Lorentzian spectral triple has commutative time, $\prt{1+ \T^2}^{-\frac{1}{2}}$ is in the center of the algebra, and by the Lemma \ref{lemmasubchar} we have 
\[
\chi(a_0) =\chi(\prt{1+ \T^2}^{\frac{m-n}{2}}) \chi(a_0^\prime) = \chi(\prt{1+ \T^2}^{-\frac{1}{2}}) ^{n-m} \chi(a_0^\prime).
\] 
Therefore, 
\begin{eqnarray*}
\chi(a) &=&  \chi(\prt{1+ \T^2}^{-\frac{1}{2}}) ^{-n} \chi(a_0)\\
 &=& \chi(\prt{1+ \T^2}^{-\frac{1}{2}}) ^{-n} \chi(\prt{1+ \T^2}^{-\frac{1}{2}}) ^{n-m} \chi(a_0^\prime)\\
&=&  \chi(\prt{1+ \T^2}^{-\frac{1}{2}}) ^{-m} \chi(a_0^\prime)
\end{eqnarray*}
and the definition of the extension of $\chi$ is independent of the chosen $a_0 \in\widetilde\A$.
\end{proof}

We can notice that, in the commutative case, pure states $\chi\in P(\widetilde\A)$ such that $\chi(\prt{1+ \T^2}^{-\frac{1}{2}}) = 0$ do not correspond to pure states in $P(\A)$ since they do not correspond to an evaluation map at any point. They are states added by the unitization process (corresponding to a compactification of the manifold) and should be ignored. So it looks natural to ignore them even in the noncommutative case.

We can also see that, even if the requirement to have a commutative time could seem sometime restrictive, it is a necessary condition in order to guarantee that the extension of the states to unbounded elements is unique, and so to have a well defined noncommutative generalization of a Lorentzian distance formula.

At the end, we have the following definition for a Lorentzian distance formula valid on both commutative and noncommutative spaces:

\begin{definition}\label{defnewdisttlostsnc}
Let us consider an even temporal Lorentzian spectral triple $(\A,\widetilde\A,\H,D,\T)$ with commutative time and with $\setZ_2$-grading $\gamma$. For every two pure states $\chi,\xi\in P(\widetilde\A)$  such that $\chi(\prt{1+ \T^2}^{-\frac{1}{2}}) \neq 0$ and $\xi(\prt{1+ \T^2}^{-\frac{1}{2}}) \neq 0$, we define:
\[
\widetilde d(\chi,\xi) := \inf_{a \in \bigcup \widetilde\A_n} \set{ \max\set{0,\xi(a)-\chi(a)} :
\forall \phi \in \H, \scal{\phi,[D,\T]([D,a]+ i\gamma) \phi } \geq 0} 
\]
where $\bigcup \widetilde\A_n$ is the filtered algebra generated by the time element $\T$ and where $\chi$ and $\xi$ are considered as their unique extension to $\bigcup \widetilde\A_n$ defined by:
\[
\chi(a) =  \chi(\prt{1+ \T^2}^{-\frac{1}{2}}) ^{-n} \;\chi(a_0)
\quad \text{ and } \quad \xi(a) =  \xi(\prt{1+ \T^2}^{-\frac{1}{2}}) ^{-n} \;\xi(a_0)
\]
\[
\text{ if }\quad a=\prt{1+ \T^2}^{\frac{n}{2}} a_0 \quad\text{ with }\quad a_0 \in\widetilde\A. 
\]
\end{definition}

\section{Conclusions}

In this paper, we have studied the problem of defining spectral triples corresponding to manifolds with Lorentzian signature. We have suggested some axioms to construct temporal Lorentzian spectral triples, which are spectral triples corresponding, in the commutative case, to globally hyperbolic Lorentzian manifolds. Those axioms are based on the existence of an element representing a global time and which is sufficient to define a fundamental symmetry needed to recover the self-adjointness of the Dirac operator. This can be shown as a kind of "3+1" decomposition of a possibly noncommutative Lorentzian manifold. An explicit construction of a noncommutative example is given in Section \ref{NLMP} with a Moyal--Minkowski spacetime. In the commutative case, the Proposition \ref{proprec} guarantees that, even with an abstract definition of temporal Lorentzian spectral triples, the corresponding metric has a Lorentzian signature and a split nature.

We have shown that, under the requirement that time is commutative, those axioms allow us to extend the usual $C^*$-algebra formalism to unbounded elements by using a technique of filtration and partial inner product spaces (PIP-spaces). This extension is a needed tool in order to extend the definition of a Lorentzian distance formula  \cite{F6} to noncommutative spaces. The noncommutative Lorentzian distance (Definition \ref{defnewdisttlostsnc}) is defined at the level of extensions of pure states to unbounded elements which are well-defined thanks to the filtration properties. At this time, the correspondence between such a formula and the usual Lorentzian distance in the commutative case is only completely proven for Minkowski spacetime, the correspondence resulting in an inequality in the general case. The possibility to define such distance on the space of pure states without requiring a commutative time is still an open question, as well as the complete proof of the correspondence with the usual Lorentzian distance. We must remark that, for more general manifolds and possibly noncommutative time, a definition of causality on the space of states can always be defined as it is shown in \cite{F6}.

The axioms of temporal Lorentzian spectral triples are subject to further development. In particular, the question of the smoothness of the Dirac operator (or more precisely of its elliptic modification) could be taken into account. It could also be interesting to wonder which temporal elements $\T$ generate a similar Lorentzian manifold, and in this case to define a set of axioms for Lorentzian spectral triples which should be independent of the choice of a temporal element. The theory of Lorentzian noncommutative geometry is clearly still in its early stages, and a lot of work should be done in order to have a complete set of axioms for Lorentzian spectral triples which could give rise to a possible reconstruction theorem.

\section*{Acknowledgement}

This work was supported by a grant from the John Templeton Foundation.

\end{document}